\documentclass[runningheads]{llncs}

\usepackage[T1]{fontenc}
\usepackage{graphicx}
\usepackage{amsmath,amssymb}
\usepackage{bussproofs}
\usepackage{ulem}
\usepackage{stmaryrd}
\usepackage{booktabs}
\usepackage{makecell}
\usepackage{appendix}
\usepackage{mathtools}

\newcommand{\Set}{\mathrm{Set}}
\newcommand{\Term}{\mathrm{Term}}
\newcommand{\Instance}{\mathrm{Instance}}
\newcommand{\Event}{\mathrm{Event}}
\newcommand{\Nat}{\mathbb{N}}

\newcommand{\tuple}[1]{\langle #1 \rangle}
\newcommand{\state}[1]{\tuple{#1}}

\usepackage{relsize} 
\usepackage{xspace}

\newcommand{\cc}[1]{\mbox{\smaller[0.5]\texttt{#1}}\xspace}

\newcommand{\unsat}{\cc{unsat}}
\newcommand{\timeout}{\cc{timeout}}
\newcommand{\unknown}{\cc{unknown}}

\newcommand{\idealSolverID}{\mbox{\sf Id}}
\newcommand{\idealSolver}{\idealSolverID \xspace}
\newcommand{\idealSolverSig}{\Sigma_{\idealSolverID}}

\newcommand{\naiveSolverID}{\mbox{\sf N}}
\newcommand{\naiveSolver}{\naiveSolverID \xspace}
\newcommand{\naiveSolverSig}{\Sigma_{\naiveSolverID}}

\newcommand{\trackingSolverID}{\mbox{\sf T}}
\newcommand{\trackingSolver}{\trackingSolverID \xspace}
\newcommand{\trackingSolverSig}{\Sigma_{\trackingSolverID}}

\newcommand{\cgcSolverID}{\mbox{\sf C}}
\newcommand{\cgcSolverSig}{\Sigma_{\cgcSolverID}}
\newcommand{\cgcSolver}{\cgcSolverID \xspace}

\newcommand{\derivable}[4]{#3 \vdash_{#1,#2} #4}
\newcommand{\derivableEI}[2]{\derivable{E}{I}{#1}{#2}}
\newcommand{\isSubterm}[2]{#1 \in #2}
\newcommand{\weight}[1]{\mathit{w}(#1)}

\newcommand{\maxCost}{\mbox{\sf maxG}}

\newcommand{\instantiatedTerm}{\theta}
\newcommand{\instUniv}{I^\infty}

\newcommand{\genParam}[3]{\mathrm{gen}^\mathsf{#1}_{#2}(#3)}
\newcommand{\genideal}[2]{\genParam{ideal}{#1}{#2}}

\newcommand{\Inst}{$\mathit{Inst}$ \xspace}
\newcommand{\Congruence}{$\mathit{Congruence}$ \xspace}
\newcommand{\Input}{$\mathit{Input}$ \xspace}

\newcommand{\sIdeal}{\state{\tau}}
\newcommand{\eqEvent}[1]{\mathtt{eq}(#1)}
\newcommand{\eqMergeEvent}[3]{\mathtt{eq}(#1,#2,#3)}
\newcommand{\instEvent}[1]{\mathtt{inst}(#1)}
\newcommand{\emptySeq}{\langle \rangle}
\newcommand{\enabledParam}[3]{\varepsilon^\mathsf{#1}_{#2}(#3)}
\newcommand{\enabled}[2]{\enabledParam{}{#1}{#2}}
\newcommand{\transition}[3]{#1 \to_{\text{#2}} #3}
\newcommand{\reachable}[3]{#1 \leadsto_{\text{#2}} #3}
\newcommand{\transitionSup}[4]{#1 \to_{\text{#2}}^{\text{#3}} #4}
\newcommand{\reachableSup}[4]{#1 \leadsto_{\text{#2}}^{\text{#3}} #4}
\newcommand{\tIdeal}[2]{\transition{#1}{$\idealSolverID$}{#2}}
\newcommand{\tS}[3]{\transition{#2}{$#1$}{#3}}
\newcommand{\rS}[3]{\reachable{#2}{$#1$}{#3}}
\newcommand{\PushEq}{\mbox{PushEq}\xspace}
\newcommand{\PopEq}{\mbox{PopEq}\xspace}
\newcommand{\PushInst}{\mbox{PushInst}\xspace}
\newcommand{\PopInst}{\mbox{PopInst}\xspace}

\newcommand{\sNaive}{\sIdeal}
\newcommand{\tNaive}[2]{\transition{#1}{\naiveSolverID}{#2}}

\newcommand{\genMap}{g}
\newcommand{\genTracked}[3]{\genParam{\trackingSolverID}{#1,#2}{#3}}
\newcommand{\lemmaRef}[1]{Lemma~\ref{lemma:#1}}
\newcommand{\enabledTr}[2]{\enabledParam{\trackingSolverID}{#1}{#2}}
\newcommand{\updateParam}[4]{\mathrm{update}^\mathsf{#1}_{#2}(#3, #4)}
\newcommand{\updateTr}[3]{\updateParam{\trackingSolverID}{#1}{#2}{#3}}
\newcommand{\sTr}{\state{\tau, \genMap}}
\newcommand{\tTr}[2]{\transition{#1}{\trackingSolverID}{#2}}

\newcommand{\genMapCGC}{\gamma}
\newcommand{\cgcMap}{\kappa}
\newcommand{\stickyUpdates}{\delta}
\newcommand{\sCGC}{\state{\tau, \cgcMap, \genMapCGC, \stickyUpdates}}
\newcommand{\tCGCMerge}[3]{\transition{#1}{$CGC(#2)$}{#3}}

\newcommand{\tApplyUpdates}[4]{\transition{#3}{$\text{replay}_{#1}(#2)$}{#4}}
\newcommand{\enabledCGC}[2]{\enabledParam{\cgcSolverID}{#1}{#2}}
\newcommand{\genCGC}[3]{\genParam{\cgcSolverID}{#1,#2}{#3}}
\newcommand{\updateCGC}[3]{\updateParam{\cgcSolverID}{#1}{#2}{#3}}
\newcommand{\newStickyUpdates}[4]{\Delta_{#1, #2}(#3, #4)}
\newcommand{\tCGC}[2]{\transition{#1}{$\cgcSolverID$}{#2}}

\newcommand{\mapMultiUpdate}[4]{#1[#2 \mapsto #3 \mid #4]}
\newcommand{\defineEq}{:=}
\newcommand{\cgParam}[3]{\mathit{congruent}_{#1}(#2, #3)}
\newcommand{\cg}[2]{\cgParam{E}{#1}{#2}}

\newcommand{\qInst}[2]{#1\llbracket#2\rrbracket}

\newif\ifdraft
\ifdraft
\usepackage{color}
\newcommand{\can}[1]{\textcolor{red}{\fbox{CC:} #1}}
\else
\newcommand{\can}[1]{}
\fi

\newcommand{\secref}[1]{\S\ref{s:#1}}

\usepackage[most]{tcolorbox} 
\usepackage{xcolor}          

\definecolor{exBoxFill}{RGB}{255, 242, 193} 

\newtcolorbox{exampleBox}{
    colback=exBoxFill,    
    boxrule=0.8pt,      
    arc=2mm,            
    left=5pt, right=5pt, top=5pt, bottom=5pt, 
    width=\textwidth,   
    enlarge left by=0mm,
    enlarge right by=0mm
}

\begin{document}
\title{A Deeper Look at Depth: Stable Generation Accounting for Quantifier Reasoning}
\titlerunning{A Deeper Look at Depth}

\author{
Can Cebeci\inst{1}
\and Nikolaj Bjørner\inst{2}
\and George Candea\inst{1}
\and Clément Pit-Claudel\inst{1}
}

\institute{
EPFL, Lausanne, Switzerland
\and
Microsoft Research, Redmond, USA
}

\maketitle              
\begin{abstract}
SMT solvers make automated verification convenient. At the same time, solvers suffer from instability, whereby seemingly inconsequential changes to the input may cause a previously quickly produced proof to fail or time out. This paper addresses a common cause of outcome instability (i.e., unsat/unknown fluctuations) in the context of program verification.
We demonstrate that the generation (i.e., depth) accounting used to make quantifier instantiation practical and implemented in multiple state-of-the-art SMT solvers is non-confluent (i.e., prone to divergence), and that this leads to instability. We then address this deficiency and develop a new accounting method that is stable. The new method is justified using a sequence of refinements from an abstract, confluent solver model all the way to our implementation in Z3.
Our empirical evaluation demonstrates that our implementation reduces outcome instability by 94\% in the unstable core of the Mariposa benchmark without leading to performance regressions.
\keywords{SMT solving \and
instability \and
quantifier reasoning}
\end{abstract}

\section{Introduction}

SMT solvers make automated verification convenient but brittle.
They automatically discharge most proof obligations, significantly lowering the required effort and expertise for developing proofs.
At the same time, solvers suffer from instability, whereby seemingly inconsequential changes to the input may cause a previously quickly produced proof to fail or time out~\cite{mariposa}.
Recent work~\cite{instability-conjecture} conjectures that the instability users encounter in practice is addressable, not fundamental. This paper addresses a common cause of outcome instability (i.e., \unsat queries turning \unknown) in the context of program verification.

Controlling quantifier instantiation is crucial for stability~\cite{shake:fmcad2024,cazamariposas,trigger-selection:cav2016}.
This is because quantifiers are heavily used to encode verification conditions, and the solver's success depends on choosing the right instances:
under-instantiation leads to \unknown and over-instantiation leads to \timeout.

Program verifiers typically control instantiation by configuring the solver to instantiate based on E-matching~\cite{efficient-ematching} exclusively, and up to some form of depth threshold~\cite{dafny,verus:sosp24,fstar}. For Z3~\cite{Z3}, they set \cc{smt.qi\_eager\_threshold}. For CVC5~\cite{cvc5}, \cc{inst-max-level}. Following Z3's terminology, we refer to measures of a term's instantiation depth as its \emph{generation number}.

We demonstrate that the generation accounting implemented in some of the state-of-the-art SMT solvers is non-confluent (i.e., prone to divergence), and that this leads to instability.
More precisely, we show that in certain cases a term's generation number (and subsequently the solver's success) is dependent on the order in which the solver explores proof branches.
These cases are not simple solver bugs.
Rather, the ambiguity is conceptual: to the best of our knowledge, there is no precise definition of what generation numbers should represent.
We use the terminology \emph{confluence} here in the sense of Church-Rosser~\cite{ChurchRosser} properties. 
It is fundamental in automated deduction, where a terminating and confluent set of rewrite rules
ensures decidability~\cite{KnuthBendix,BuchbergerGrobner} in several settings. SMT solvers using E-matching for quantifier instantiation ensure termination by bounding quantifier generation.
But as we observe, bounding instantiation using generation thresholds is an important ingredient for breaking confluence. 

We address this issue with the following contributions:
\begin{itemize}
    \item A precise, solver-agnostic specification of generation semantics for E-matching with quantifier weights.
    \item \idealSolver: an abstract state machine model of solver behavior with access to idealized generation numbers. \idealSolver is confluent, but not practically implementable. It serves as a reference for the generation accounting problem independently of a specific implementation.
    \item \naiveSolver, \trackingSolver, \cgcSolver: a series of implementable, increasingly efficient solver models, each obtained by refining the previous one. We prove, through bisimulation, that each solver is confluent and no more permissive than \idealSolver with quantifier instantiation.
    \item An efficient implementation of \cgcSolver in Z3, upstream as of version 5.0.
    \item An empirical evaluation demonstrating that the implementation reduces outcome instability by 94\%, while incurring 2\% fewer timeouts, in the unstable core of the Mariposa benchmark~\cite{mariposa}, which includes program verification queries. We also demonstrate that our implementation does not lead to performance regressions for SMT-LIB benchmarks~\cite{smtlib2025}.
\end{itemize}

In this work-in-progress paper, we present proof sketches for the refinements \naiveSolver, \trackingSolver, and \cgcSolver, aiming to convey the underlying intuition rather than full rigor. We present some of the fully formalized proofs in the appendix; the correctness of the remaining refinements is conjectured, supported by the proof sketches.

\section{Background and Motivating Examples}
In this section, we provide an overview of E-matching with generation thresholds and quantifier weights, and motivate the need to precisely define generation semantics. 

\vspace{35pt}

We use the following query as a running example:
\begin{exampleBox}
\begin{verbatim}
forall x. f(x+1) > f(x)
f(0) > 0
f(3) < 0
\end{verbatim}
\end{exampleBox}
The query is unsatisfiable: instantiating the quantifier three times with $x\defineEq0,1,2$ leads to a contradiction.

\subsection{E-matching}
Program verifiers typically configure SMT solvers to produce quantifier instantiations using E-matching, whereby the solver produces instances based on ground terms that match a given pattern.
For our running example, the pattern $f(x)$ would lead to the right instantiations: $f(0)$ in the input would match $f(x)$ with $x\defineEq0$. That instantiation would create the term $f(1)$, which would match again with $x:=1$ and so on.

Though the pattern is productive in this scenario, it also creates a matching loop. Had the query been satisfiable (e.g., had the last assertion been omitted), the solver could produce infinitely many instances rather than quickly giving up. To guard against this, most solvers expose configuration parameters that set thresholds controlling when to give up and return \unknown.

\subsection{Generation thresholds}
A common idea is to threshold the \textit{depth} of instantiation chains as opposed to the total number of instantiations. The latter is prone to brittleness, as a productive instantiation may never take place depending on the order in which the solver explores proof branches.

One way to track instantiation depth is to associate a generation number with each term, such that all input terms have generation $0$, and terms created through an instantiation have $1$ plus the maximal generation among matching terms~\cite{z3-internals}. We block an instantiation if the maximal generation is above the threshold.  In the example above $f(0)$ would have generation $0$, $f(1)$ would have $1$ and so on. Therefore, a generation threshold of $3$ would suffice for the solver to return \unsat.

\subsection{Quantifier weights}
Now consider adding another quantifier to the query:
\begin{exampleBox}
\begin{verbatim}
forall x. f(f(x)) > f(x+2) > f(x) ::: pattern f(x)
\end{verbatim}
\end{exampleBox}
This quantifier is more expensive: it creates two fresh terms, and both terms re-match the pattern.
So it produces $2^{c-1}$ instantiations where $c$ is the generation threshold.
We would like to keep instantiation chains involving this quantifier shorter than others.
One way to do this is to associate a larger \textit{weight} with expensive quantifiers, and increment the generation numbers of fresh terms by the weight, instead of just by $1$.

To prevent recursive matches on the expensive quantifier while still allowing the productive instantiations, we can configure the query as follows:

\begin{exampleBox}
\begin{verbatim}
generation threshold: 3
forall x. f(f(x)) > f(x+2) > f(x) ::: pattern: f(x), weight: 3
forall x. f(x+1) > f(x) ::: pattern: f(x), weight: 1
f(0) > 0
f(3) < 0
\end{verbatim}
\end{exampleBox}

\subsection{Generation ambiguity}

\subsubsection{Example 1}
In the current state of our working example, the generation number of $f(2)$ is ambiguous:
$f(0)$ is part of the input so it has generation $0$.
Then the first quantifier can generate $f(2)$ at generation $3$. On the other hand the second quantifier can generate $f(2)$ at generation $2$ after two rounds of instantiation.
This ambiguity gives rise to instability, since $f(2)$'s generation determines whether the instantiation producing $f(3) > f(2)$ is allowed, and consequently whether the query is \unsat or \unknown.

\subsubsection{Example 2}
The ambiguity is not uniquely brought on by quantifier weights. Consider the following query:
\begin{exampleBox}
\begin{verbatim}
generation threshold: 3
forall x. g(x) == h(x) ::: pattern g(x), weight 1
forall x. h(x) == g(x+1) ::: pattern h(x), weight 1
forall x. f(g(x)) > g(x) ::: pattern g(x), weight 1
forall x. f(f(x)) > f(x) ::: pattern f(x), weight 1
forall x. f(f(x)) < 0 ::: pattern f(f(x)), weight 1
g(0) > 0
\end{verbatim}
\end{exampleBox}
$g(0)$ has generation $0$.
By instantiating the first two quantifiers, we can generate $g(1)$ at generation $2$, and learn $g(0) = g(1)$.
From the third quantifier, we get $f(g(0))$ at generation $1$.

The fourth quantifier should either match $f(g(0))$ or $f(g(1))$, the latter through the equality we learned. Matching both would be wasteful, since $x := g(0)$ and $x := g(1)$ are equivalent bindings. If the fourth quantifier matches $f(g(0))$, we get $f(f(g(0)))$ at generation $2$, which matches the fifth quantifier and leads to a contradiction, returning \unsat. But if the fourth quantifier matches $f(g(1))$, since $g(1)$ has generation $2$ , we get $f(f(g(1)))$, an equivalent term, at generation $3$ and so give up, returning \unknown, without instantiating the fifth quantifier.

\section{Preliminaries}
\subsection{Terms and contexts}
Let $x$, $c$, and $f$ range over variables, constants, and uninterpreted function
symbols, respectively. Terms are given by the grammar
\[
t \defineEq
    x
    \mid c
    \mid f(t_1,\ldots,t_n).
\]

A context is a term containing exactly one hole~$\square$, which is a
proper subterm. Contexts are defined recursively by
\[
C \defineEq
    f(t_1,\ldots,\square,\ldots,t_n)
    \mid
    f(t_1,\ldots,C,\ldots,t_n).
\]

For a context $C$ and term $t$, we write $C[t]$ for the term obtained by
replacing the unique occurrence of $\square$ in $C$ with $t$.

\subsection{Congruence and equality}
We make a central distinction between equivalence and congruence modulo the theory of equality. Equivalence is a weaker notion:
two expressions $s, t$, are equivalent in $E$, written
$t \cong_E s$, if the equality between $s$ and $t$ can be derived from equalities in $E$ using any rule in the theory of equality (reflexivity, symmetry, transitivity, congruence).
On the other hand, two terms are congruent if they are equivalent \emph{and} their equivalence can be inferred using the congruence
rule:

\begin{definition}[Congruence between terms]
We say that $t$ and $t'$ are congruent if they have the same function symbol and all arguments are equivalent in $E$.
In other words, for every function symbol $f$, congruence is given by
\[
\AxiomC{$t_i \cong_E t_i'$ for $i = 1, \ldots, n$}
\UnaryInfC{$\cg{f(t_1, \ldots, t_n)}{f(t_1', \ldots, t_n')}$}
\DisplayProof
\]
\end{definition}
The E-matching and congruence-closure literature~\cite{fast-congruence-closure} uses terminology that can blur this distinction. For example, the statement ``if two terms $f(t_1,\ldots,t_n)$ and $f(t'_1,\ldots,t'_n)$ are congruent, then it is wasteful to try to match both''~\cite{efficient-ematching} is not valid under the weaker interpretation where only $f(a)=f(b)$ is known: matching $f(x)$ can produce non-equivalent bindings $x:=a$, $x:=b$ when $a \neq b$.

\subsection{Quantifiers and instances}
Quantifiers and quantifier instances are not terms. Instead, we treat them as
distinct syntactic objects.

For simplicity, we assume each quantifier binds a single pattern and that quantifiers are not nested. Generalizing to multiple bound variables and nested quantifiers is conceptually straightforward but at the cost of more involved formalization.

A quantifier instance is denoted by $\qInst{q}{\instantiatedTerm}$, where $q$ is a quantifier
and $\instantiatedTerm$ is the term matching its pattern (not the bound variable).
We write
$\isSubterm{t}{\qInst{q}{\instantiatedTerm}}$ to mean that the instantiated body of $q$ with $\instantiatedTerm$ contains $t$ as a subterm.

Each quantifier $q$ is associated with a non-negative integer weight, denoted by $\weight{q}$.

\subsection{Other notation}
\begin{itemize}
    \item For functions $f,g,h$ and predicate $P$, $f' = f[g(x) \mapsto h(x) \mid P(g(x))]$ denotes $\forall x. \; (P(g(x)) \implies f'(g(x)) = h(x))$
    $\land \forall y .\; (\lnot P(y) \implies f'(y) = f(y))$
    \item Similarly $f' = f[y \mapsto z]$ denotes  $f'(y) = z \land \forall x. \; x \neq y \implies f'(x) = f(x)$
\end{itemize}

\section{Defining Generation Semantics}
Intuitively, we define the generation number of a term as the weighted depth of the shallowest derivation tree, where leaves are input terms and edges are quantifier instantiations or equality rewrites using previously-derived terms.

For intuition, consider the following scenario. Solid lines represent instantiations and dashed lines represent rewrites:
\begin{exampleBox}
\begin{minipage}[c]{0.55\textwidth}
\begin{itemize}
    \item input terms: $f(0)$ and $g(0)$
    \item quantifiers
    ($\textit{pattern}\xrightarrow{\textit{weight}}\textit{term}$): $f(x)\xrightarrow{3}f(x+1)$, $g(x)\xrightarrow{5}g(x+1)$, $f(g(x))\xrightarrow{1}h(x)$
    \item equality $g(1) = 1$
\end{itemize}
The generation number of $h(1)$ is 6, through the derivation tree shown.
\end{minipage}
\hfill
\begin{minipage}[c]{0.48\textwidth}
\centering
\begin{tikzpicture}[
    every node/.style={font=\scriptsize, inner sep=1pt},
    >=stealth, scale=0.9
]
    \node (f0) at (0,0)   {$f(0)$};
    \node (g0) at (3.0,0) {$g(0)$};

    \node (f1) at (0,-1.8) {$f(1)$};

    \node (fg1) at (0,-3.0)   {$f(g(1))$};
    \node (g1)  at (3.0,-3.0) {$g(1)$};

    \node (h1) at (0,-3.8) {$h(1)$};

    \foreach \g/\y in {0/0, 3/-1.8, 5/-3.0, 6/-3.8}
        \node[anchor=east, font=\scriptsize\itshape] at (-1,\y) {gen \g};

    \draw[->] (f0) -- node[left] {$3$} (f1);
    \draw[->] (g0) -- node[right] {$5$} (g1);
    \draw[dashed,->] (f1) -- (fg1);
    \draw[dashed,->] (g1) -- node[above] {$1=g(1)$} (fg1);
    \draw[->] (fg1) -- node[left] {$1$} (h1);
\end{tikzpicture}
\end{minipage}
\end{exampleBox}

\begin{definition}[Reachable generation]
Given a set of quantifiers $Q$, input terms $A$, and equalities $E$, and instantiations $I$, a reachable generation $g$ of a term $t$ is a tuple $\derivableEI{t}{g}$, that can be obtained from one of the inferences:
\[
\begin{array}{c}
\AxiomC{$t \in A$}
\RightLabel{(\text{\mbox{Input}})}
\UnaryInfC{$\derivableEI{t}{0}$}
\DisplayProof
\\
\\
\AxiomC{$q \in Q$}
\AxiomC{$\isSubterm{t}{\qInst{q}{\instantiatedTerm}}\in I$}
\AxiomC{$\derivableEI{\instantiatedTerm}{g}$}
\RightLabel{(\text{\mbox{Inst}})}
\TrinaryInfC{$\derivableEI{t}{\weight{q} + g}$}
\DisplayProof
\\
\\
\AxiomC{$s_1 \cong_E s_2$}
\AxiomC{$\derivableEI{t[s_1]}{g}$}
\AxiomC{$\derivableEI{s_2}{g'}$}
\RightLabel{(\text{\mbox{Congruence}})}
\TrinaryInfC{$\derivableEI{t[s_2]}{\max(g, g')}$}
\DisplayProof
\end{array}
\]
\end{definition}

$A$ and $Q$ are constant given an input query, so we assume them implicit arguments. $A$ is subterm-closed (i.e., $\forall t. \; t \in A \land \isSubterm{t'}{t} \implies t' \in A$)

It is not practically feasible for solver implementations to track generation numbers in a way that exhaustively considers all possible quantifier instantiations. We therefore make a distinction between ideal and implementation generations.

\newcommand{\gen}[3]{\genParam{}{#1,#2}{#3}}

\begin{definition}[Implementation generation]
We define the implementation generation of a term $t$ with respect to $A, Q, E, I$ as
\[
    \gen{E}{I}{t} \defineEq \min \{ g \mid \derivableEI{t}{g} \}
\]
\end{definition}

\begin{definition}[Universal set of instantiations]
\[
\instUniv \defineEq \{\qInst{q}{\instantiatedTerm} \mid
q \in Q, \instantiatedTerm \mbox{ is a term} \}
\]
\end{definition}

\begin{definition}[Ideal generation]
We define the ideal generation of a term $t$ with respect to $A, Q, E$ as
\[
\genideal{E}{t} \defineEq \min \{ g \mid \derivable{E}{\instUniv}{t}{g} \}
\]
\end{definition}
Whenever convenient, we denote $\genideal{E}{t}$ with $\gen{E}{\instUniv}{t}$

For a set of equalities $E$, $\genideal{E}{t}$ is uniquely determined and independent of instantiation history, backtracking, or internal data-structure state.

\section{An Ideal Solver Model - \idealSolver}
This section presents \idealSolver, a solver model with access to $\genideal{E}{t}$, used as a reference for practical algorithms. 

Our method only depends on inference steps involving quantifier instantiations and equalities, and the model omits preconditions and inferences that are irrelevant to us.
In particular, we do not model conflicts, instead assuming the solver may backtrack at any point.

\subsection{State representation}

The ideal solver maintains a state of the form
$
\idealSolverSig : \sIdeal
$
where
\begin{itemize}
    \item $\tau: \Event^*$ - stack of locally asserted equalities and instances
    \item $\Event \defineEq \eqEvent{t_1 \simeq t_2} \mid \instEvent{\qInst{q}{\instantiatedTerm}}$
    for $q \in Q$
\end{itemize}

The initial state is, $\sigma^{init}_{\idealSolverID} \defineEq \state{\tau\defineEq\emptySeq}$.

Beyond the current state $\state{\tau}$ we assume the following global constants:
\begin{itemize}
    \item $A: \Set(\Term)$ (ground terms in input query),
    \item $Q: \Set(\mathrm{Quantifier})$ (quantifiers),
    \item $\maxCost: \Nat$ (generation threshold).
\end{itemize}

For convenience we also define a few shortcuts based on a state:
\begin{itemize}
    \item $E(\tau) \defineEq \{ e \mid \eqEvent{e} \in \tau \} : \Set(\Term)$ - set of equalities that are asserted within the scope.
    \item $I(\tau)\defineEq \{ \qInst{q}{\instantiatedTerm} \mid \instEvent{\qInst{q}{\instantiatedTerm}} \in \tau \} : \Set(\Instance)$ - set of active quantifier instantiations.
\end{itemize}

\newcommand{\stack}[1]{\mathcal{T}(#1)}

Unless indicated otherwise, we use $E$ as shorthand for $E(\tau)$ and $I$ for $I(\tau)$. We also use $\stack{\sigma}$ to refer to $\tau$ where $\sigma \defineEq \state{\tau}$.

\begin{definition}[Enabled instances]
For a set of equalities $E$ and quantifier instances $I$, define the set of instances enabled within generation threshold $\maxCost$ as
\[
\enabled{E}{I} \defineEq \{\qInst{q}{\instantiatedTerm} \mid \weight{q} + \gen{E}{I}{\instantiatedTerm} \leq \maxCost, \ q \in Q,\ \instantiatedTerm \mbox{ is a term}\}
\]
\end{definition}

\noindent
The transition rules for \idealSolver are as follows:
\[
\begin{tabular}{ccc}
\AxiomC{$e \defineEq (t_1 \simeq t_2)$}
\RightLabel{(\PushEq)}
\UnaryInfC{$\tIdeal{\sIdeal}{\state{\tau \cdot \eqEvent{e}}}$}
\DisplayProof
&\quad\quad &
\AxiomC{}
\RightLabel{(\PopEq)}
\UnaryInfC{$\tIdeal{\state{\tau \cdot \eqEvent{e}}}{\state{\tau}}$}
\DisplayProof
\\[2em]

\AxiomC{$\qInst{q}{\instantiatedTerm} \in \enabled{E}{\instUniv}$}
\RightLabel{(\PushInst)}
\UnaryInfC{$\tIdeal{\sIdeal}{\state{\tau \cdot \instEvent{\qInst{q}{\instantiatedTerm}}}}$}
\DisplayProof
& &
\AxiomC{}
\RightLabel{(\PopInst)}
\UnaryInfC{$\tIdeal{\state{\tau\cdot \instEvent{\qInst{q}{\instantiatedTerm}}}}{\state{\tau}}$}
\DisplayProof
\end{tabular}
\]

We use $\rS{\idealSolverID}{}{}$ to denote the transitive closure of $\tS{\idealSolverID}{}{}$ (i.e., reachability).

\subsection{Stability and confluence}
Our goal is to prove that each solver is stable, i.e., if there is some way for the solver to return \unsat, the solver will always do so. We do not model decisions and conflicts, or returning \unsat or \unknown. Instead we provide a definition of confluence and argue that a confluent solver is outcome-stable.

\newcommand{\observable}[1]{\mathrm{Obs}(#1)}

A solver returns \unsat if it finds a set of conflicts that suffice to prove false, and returns \unknown if no such set can be found. 
When the solver visits $\state{\tau}$, it may find a conflict between equalities in $E(\tau)$ considering instances in $I(\tau)$.
We define all such (potentially conflicting) contexts the solver may observe starting at state $\sigma$ as $\observable{\sigma}$:
\begin{definition} [Observable contexts]
\[
\observable{\sigma} \defineEq
    \{\tuple{E_o, I_o} \mid 
        \rS{S}{\sigma}{\sigma'} \land 
        E_o \subseteq E(\stack{\sigma'}) \land
        I_o \subseteq I(\stack{\sigma'})
        \}
\]

\end{definition}

We call a solver confluent if choosing to take certain steps never shrinks the set of observable contexts.
\begin{definition}[Confluence]
Solver $S$ with initial state $\sigma^{init}_S$ is confluent if for all states $\sigma$
\[
\rS{S}{\sigma^{init}_S}{\sigma} \implies \observable{\sigma^{init}_S} = \observable{\sigma}
\]
\end{definition}

A confluent solver is outcome-stable: if the solver may return \unsat, there is a set of conflicts, each observable from the initial state, that collectively prove \unsat. Due to confluence those conflicts remain observable after every transition, so the solver will never return \unknown.

\begin{theorem}
    \idealSolver is confluent.
\end{theorem}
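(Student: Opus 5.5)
The plan is to show that every state reachable from $\sigma^{init}_{\idealSolverID}$ can reach back to the initial state, and conversely. Mutual reachability forces the two observable sets to coincide. Since $\rS{\idealSolverID}{}{}$ is the transitive closure of the one-step relation, it is transitive. Hence if $\rS{\idealSolverID}{\sigma}{\sigma^{init}_{\idealSolverID}}$, then everything reachable from the initial state is also reachable from $\sigma$, giving $\observable{\sigma^{init}_{\idealSolverID}} \subseteq \observable{\sigma}$. The reverse inclusion $\observable{\sigma} \subseteq \observable{\sigma^{init}_{\idealSolverID}}$ is immediate from $\rS{\idealSolverID}{\sigma^{init}_{\idealSolverID}}{\sigma}$ and transitivity.

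The key step is therefore a \emph{rollback lemma}: from any state $\state{\tau}$ there is a sequence of transitions to $\state{\emptySeq}$. I would prove it by induction on the length of $\tau$. If $\tau = \tau' \cdot \eqEvent{e}$, the rule \PopEq applies unconditionally and yields $\state{\tau'}$. If $\tau = \tau' \cdot \instEvent{\qInst{q}{\instantiatedTerm}}$, the rule \PopInst applies unconditionally. In either case the induction hypothesis takes $\state{\tau'}$ to the empty stack.

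One technicality concerns the case where $\sigma$ is itself the initial state (zero steps), since $\rS{}{}{}$ is defined as a transitive rather than reflexive-transitive closure. To handle it, I would either read reachability reflexively, or note that the initial state can reach itself in two steps: push an equality with \PushEq, which has no side condition, then pop it with \PopEq. The same trick lets $\sigma'$ in the definition of $\observable{\cdot}$ range over the start state as well.

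I expect no real obstacle; the point worth remarking is \emph{why} this works for \idealSolver. Pop rules carry no precondition, and the only guarded rule, \PushInst, is guarded by $\enabled{E}{\instUniv}$. That guard depends only on $E(\tau)$ and not on the instantiation history, because $\genideal{E}{t}$ is history-independent. Strictly, the rollback argument never needs this fact, since it only uses pops. The history-independence matters instead for the later refinements (\naiveSolver, \trackingSolver, \cgcSolver). There, showing bisimulation with \idealSolver is what will transfer this easy confluence proof, and I would note this in the writeup.
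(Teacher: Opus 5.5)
Your proposal is correct and follows the same route as the paper: its one-line proof notes that \idealSolver can pop every event unconditionally, return to $\sigma^{init}_{\idealSolverID}$, and observe the same contexts from there. Your induction on the length of $\tau$ and your explicit two-inclusion argument make this precise; the zero-step technicality is moot, because $\sigma = \sigma^{init}_{\idealSolverID}$ makes the equality trivial.
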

\begin{proof}
    \idealSolver can perfectly backtrack every step, so it is always possible to return from $\sigma$ to $\sigma^{init}_S$, and from there observe the same contexts.
\end{proof}

\section{Realizable (And Efficient) Solver Models}
Here we present our algorithm through a sequence of realizable solver models that get iteratively more efficient. \secref{implementation} then describes how the last algorithm translates into a Z3 implementation.

For each solver $S$, we prove that $S$ is confluent and that $S$ is not more permissive with quantifier instantiation than \idealSolver. We do so by defining a simulation relation between solvers, proving bisimulation between \idealSolver and each subsequent solver, and proving lemmas which state that bisimulation implies the desired properties (\lemmaRef{sim-inherits-obs}, \lemmaRef{bisim-implies-confluence}).

The state tuple of each solver model includes $\tau$, the event trail. $\tau$ is always initially empty, and $\PushEq$ is always available. We define a notion of subsumption between trails, which we then use to define simulation between solvers:

\newcommand{\subsumes}[2]{#2 \sqsubseteq #1}

\begin{definition} [Trail subsumption]
We define subsumption between trails $\tau$, $\tau'$ as
\[
\subsumes{\tau'}{\tau} \defineEq E(\tau) \subseteq E(\tau') \land I(\tau) \subseteq I(\tau')
\]
\end{definition}

\newcommand{\simulates}[2]{#2 \preceq #1}

\begin{definition}[Simulation]
We define simulation between solvers $R$,$S$ as
\[
\begin{aligned}
\simulates{S}{R}
\defineEq {}&
\forall \sigma_S, \sigma_R, \sigma_R'.\;
    \rS{S}{\sigma_S^{init}}{\sigma_S}
    \land \rS{R}{\sigma_R^{init}}{\sigma_R}
    \land \subsumes{\stack{\sigma_S}}{\stack{\sigma_R}}
    \land \tS{R}{\sigma_R}{\sigma_R'} \\
&\qquad\implies
    \exists \sigma_S'.\;
        \rS{S}{\sigma_S}{\sigma_S'}
        \land \subsumes{\stack{\sigma_S'}}{\stack{\sigma_R'}}
\end{aligned}
\]
\end{definition}

\begin{lemma}[Simulation preserves observations]
\label{lemma:sim-inherits-obs}
For solvers $R$ and $S$, if $\simulates{R}{S}$, $\observable{\sigma^{init}_S} \subseteq \observable{\sigma^{init}_R}$.
\end{lemma}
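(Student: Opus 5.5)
We prove the inclusion directly. Take an arbitrary $\tuple{E_o, I_o} \in \observable{\sigma^{init}_S}$ and exhibit an $R$-run that observes it, by transporting the witnessing $S$-run step by step through the simulation hypothesis. Recall that $\simulates{R}{S}$ unfolds, by the definition of simulation with the roles instantiated, to the following: whenever $\sigma_S$ and $\sigma_R$ are reachable from their initial states, $\subsumes{\stack{\sigma_R}}{\stack{\sigma_S}}$ holds, and $\tS{S}{\sigma_S}{\sigma_S'}$, then there exists $\sigma_R'$ with $\rS{R}{\sigma_R}{\sigma_R'}$ and $\subsumes{\stack{\sigma_R'}}{\stack{\sigma_S'}}$.

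\textbf{Setup.} Fix a witness run $\sigma^{init}_S = \sigma_0 \to_S \sigma_1 \to_S \cdots \to_S \sigma_n$ with $E_o \subseteq E(\stack{\sigma_n})$ and $I_o \subseteq I(\stack{\sigma_n})$.

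\textbf{Induction.} We prove by induction on $k \le n$ that there is an $R$-state $\rho_k$ such that
\begin{itemize}
    \item $\rho_k$ is reachable from $\sigma^{init}_R$ (or equals it), and
    \item $\subsumes{\stack{\rho_k}}{\stack{\sigma_k}}$.
\end{itemize}
For the base case $k=0$, take $\rho_0 \defineEq \sigma^{init}_R$. Both initial trails are empty, so subsumption holds trivially. For the inductive step, the hypotheses of the simulation property are exactly the invariant at $k$ together with the reachability of $\sigma_k$, which holds because it lies on the witness run. The property yields $\rho_{k+1}$ with $\rS{R}{\rho_k}{\rho_{k+1}}$ and $\subsumes{\stack{\rho_{k+1}}}{\stack{\sigma_{k+1}}}$. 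Reachability of $\rho_{k+1}$ from $\sigma^{init}_R$ then follows by transitivity of $\rS{R}{}{}$.

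\textbf{Conclusion.} At $k=n$, unfolding trail subsumption gives $E_o \subseteq E(\stack{\sigma_n}) \subseteq E(\stack{\rho_n})$, and likewise for $I_o$. Hence $\tuple{E_o, I_o} \in \observable{\sigma^{init}_R}$.

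\textbf{Main obstacle.} The only delicate point is the zero-length cases: the witness may be $\sigma^{init}_S$ itself, and a simulating step may leave $R$ where it is. Both require $\rS{}{}{}$ to include the reflexive case. We will read reachability as the reflexive–transitive closure, which is clearly the intent, since the confluence argument for \idealSolver likewise returns to the initial state. Under a strictly transitive reading, we instead use that $\PushEq$ followed by $\PopEq$ is always available. This lets any state reach itself in two steps, and in particular $\sigma^{init}_R$ reach itself, recovering reflexivity for all solvers considered.
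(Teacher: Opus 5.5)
Your proposal is correct and follows the same route as the paper: starting from the two empty initial trails, you transport each step of the $S$-witness run through $\simulates{R}{S}$, then read off $E_o \subseteq E(\stack{\rho_n})$ and $I_o \subseteq I(\stack{\rho_n})$ from the final trail subsumption. Your explicit induction and your remark on reflexivity of $\leadsto$ make precise what the paper leaves implicit; under a strictly transitive reading the real issue is the reachability premise for the initial states at $k=0$, which your PushEq--PopEq loop covers anyway.
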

\begin{proof}
Given $\tuple{E_o, I_o} \in \observable{{\sigma^{init}_S}}$, we know there is a $\sigma_S$ with $\rS{S}{\sigma^{init}_S}{\sigma_S}$ and $E_o \subseteq E(\sigma_S) \land I_o \subseteq I(\sigma_S)$.
We have $\subsumes{\stack{\sigma^{init}_R}}{\stack{\sigma^{init}_S}}$ since both trails are empty.
Since $\simulates{R}{S}$, we can simulate every step in $\rS{S}{\sigma^{init}_S}{\sigma_S}$ to get a $\sigma_R$ with $\rS{R}{\sigma^{init}_R}{\sigma_R}$ and $\subsumes{\stack{\sigma_R}}{\stack{\sigma_S}}$.
Then we have $E_o \subseteq E(\sigma_R) \land I_o \subseteq I(\sigma_R)$ and so $\tuple{E_o, I_o} \in \observable{{\sigma^{init}_R}}$.
\end{proof}

\begin{lemma}[Bisimulation with \idealSolver implies confluence]
\label{lemma:bisim-implies-confluence}
For solver $S$, if $\simulates{S}{\idealSolver}$ and $\simulates{\idealSolver}{S}$, $S$ is confluent.
\end{lemma}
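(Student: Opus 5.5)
Fix a state $\sigma$ with $\rS{S}{\sigma^{init}_S}{\sigma}$; we must show $\observable{\sigma^{init}_S} = \observable{\sigma}$. The inclusion $\observable{\sigma} \subseteq \observable{\sigma^{init}_S}$ is immediate by transitivity of reachability: anything reachable from $\sigma$ is reachable from $\sigma^{init}_S$. The substance is the reverse inclusion. The plan is to route it through \idealSolver: transport $\sigma$ to an \idealSolver state, use perfect backtracking there, and transport back to $S$, using the two simulation hypotheses $\simulates{S}{\idealSolver}$ and $\simulates{\idealSolver}{S}$ in the spirit of \lemmaRef{sim-inherits-obs}.

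First, using $\simulates{\idealSolver}{S}$ and the fact that both initial trails are empty, we simulate the run $\rS{S}{\sigma^{init}_S}{\sigma}$ step by step, exactly as in the proof of \lemmaRef{sim-inherits-obs}. This yields an \idealSolver state $\sigma_I$ with $\subsumes{\stack{\sigma_I}}{\stack{\sigma}}$. However, we need the reverse direction, starting from $\sigma$ and simulating an \idealSolver run. For this we take the \idealSolver state $\sigma_I'$ whose trail is exactly $\stack{\sigma}$, i.e.\ we replay the events of $\stack{\sigma}$ by \PushEq/\PushInst. Here each \PushInst must be enabled in \idealSolver. This should follow because an $S$-instantiation that was simulated lands in a trail containing it, and since $\genideal{E}{\cdot}$ lower-bounds every implementation generation ($I \subseteq \instUniv$, so $\min$ over more derivations is smaller), any instance enabled in $S$ is enabled in \idealSolver. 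Alternatively, and more robustly, we simply work with $\sigma_I$ and $\subsumes{\stack{\sigma_I}}{\stack{\sigma}}$ in the direction that the simulation definition supplies.

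Next, take any $\tuple{E_o,I_o}\in\observable{\sigma^{init}_S}$. By \lemmaRef{sim-inherits-obs}, it lies in $\observable{\sigma^{init}_{\idealSolverID}}$, witnessed by a run $\sigma^{init}_{\idealSolverID} \leadsto \sigma_I''$. In \idealSolver we can pop everything from $\sigma_I'$ back to the initial state and then perform that run. So there is an \idealSolver run from $\sigma_I'$ (whose trail is subsumed by $\stack{\sigma}$) to $\sigma_I''$. Now apply $\simulates{S}{\idealSolver}$ along this run, starting from the pair $(\sigma, \sigma_I')$, where $\subsumes{\stack{\sigma}}{\stack{\sigma_I'}}$ holds. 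Each \idealSolver step is matched by an $S$-run, and subsumption is maintained. The result is $\sigma'$ with $\rS{S}{\sigma}{\sigma'}$ and $\subsumes{\stack{\sigma'}}{\stack{\sigma_I''}}$, hence $\tuple{E_o,I_o}\in\observable{\sigma}$.

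The main obstacle is the bridging step: producing an \idealSolver state reachable from $\sigma^{init}_{\idealSolverID}$ whose trail is subsumed by $\stack{\sigma}$, in the direction required by the simulation definition. I would first try the replay construction, justified by the monotonicity $\genideal{E}{t} \le \gen{E}{I}{t}$. If that fails, I would note that the empty trail is subsumed by any trail, so the pair $(\sigma, \sigma^{init}_{\idealSolverID})$ already satisfies the hypothesis of $\simulates{S}{\idealSolver}$. Simulating the witnessing \idealSolver run directly from $\sigma^{init}_{\idealSolverID}$ against $\sigma$ then gives the result, and sidesteps the bridge entirely. I expect this second route to be the cleanest, so I would adopt it as the main argument.
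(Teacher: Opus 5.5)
Your final argument (the second route, which you say you would adopt) is correct, and it is a genuinely different and shorter proof than the paper's. The paper:
\begin{itemize}
\item uses $\simulates{\idealSolver}{S}$ to find an \idealSolver state above $\sigma$;
\item uses backtracking in \idealSolver to build a state $\sigma_{\idealSolverID}$ with the same instances and more equalities, which relies on monotonicity of ideal generations in $E$;
\item lets $S$ push the missing equalities, so that the two trails have equal contents;
\item invokes confluence of \idealSolver to move from $\sigma_{\idealSolverID}$ to a state covering $\tuple{E_o,I_o}$;
\item only then transports that run back with $\simulates{S}{\idealSolver}$.
\end{itemize}
You observe instead that the empty trail of $\sigma^{init}_{\idealSolverID}$ is subsumed by $\stack{\sigma}$. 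So the pair $(\sigma,\sigma^{init}_{\idealSolverID})$ already satisfies the hypothesis of $\simulates{S}{\idealSolver}$. Iterating it along the \idealSolver run that witnesses $\tuple{E_o,I_o}\in\observable{\sigma^{init}_{\idealSolverID}}$ (which \lemmaRef{sim-inherits-obs} provides) yields $\observable{\sigma^{init}_S}\subseteq\observable{\sigma^{init}_{\idealSolverID}}\subseteq\observable{\sigma}$. At each step, the current $S$ state is reachable from $\sigma^{init}_S$ because it is reachable from $\sigma$.

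Your route needs neither confluence nor backtracking of \idealSolver, nor that $S$ can always push equalities, nor the paper's construction of $\sigma_{\idealSolverID}$. The only formal detail is that $\sigma^{init}_{\idealSolverID}$ must be reachable from itself. This holds via \PushEq followed by \PopEq, or by reading $\leadsto$ reflexively, as the paper's proof of \lemmaRef{sim-inherits-obs} already does. In exchange, the paper's route pairs $S$ with an \idealSolver state whose trail has exactly the same contents, so it would survive a tighter state correspondence. Your shortcut depends on subsumption being a one-sided inclusion that the empty trail satisfies trivially.

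When writing it up, drop the first route altogether. Its key claim is that every instance enabled in $S$ is enabled in \idealSolver under $E(\stack{\sigma})$. That claim is not available for an arbitrary $S$, and it actually fails for \trackingSolver. Sticky updates made under since-popped equalities can leave tracked generations below the ideal ones for the current equalities. This is exactly why \lemmaRef{trackedgen-lowerbound} needs an auxiliary $E'$. Likewise, as you noticed yourself, the pair $(\sigma,\sigma_I)$ obtained from $\simulates{\idealSolver}{S}$ has subsumption in the wrong direction for applying $\simulates{S}{\idealSolver}$.
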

\begin{center}
\begin{tikzpicture}[ 
    >=stealth, 
    state/.style={}, 
    trans/.style={blue,->,thick}, 
    eq/.style={dashed,->}, 
] 
\node[state] (s0) at (0,2) {$\sigma^{\mathrm{init}}_{S}$};
\node[state] (s1) at (4,2) {$\sigma_{S}$};
\node[state] (s1p) at (6,2) {$\sigma^{+}_{S}$};

\node[state] (sp) at (2.5,1.2) {$\sigma'_{S}$};
\node[state] (l0) at (0,0) {$\sigma^{\mathrm{init}}_{\mathrm{\idealSolverID}}$};
\node[state] (l1) at (4,0) {$\sigma^{+}_{\mathrm{\idealSolverID}}$};
\node[state] (l2) at (6,0) {$\sigma_{\mathrm{\idealSolverID}}$};

\node[state] (lp) at (2.5,-0.8) {$\sigma'_{\mathrm{\idealSolverID}}$};
\node[state] (spp) at (8.5,1.2) {$\sigma''_{S}$};
\node[state] (lpp) at (8.5,-0.8) {$\sigma''_{\mathrm{\idealSolverID}}$};

\draw[trans] (s0) -- (s1);
\draw[trans] (s1) -- (s1p);
\draw[trans] (l0) -- (l1);
\draw[trans] (l1) -- (l2); 

\draw[trans] (s0) -- (sp);
\draw[trans] (l0) -- (lp);
\draw[trans] (s1p) -- (spp);
\draw[trans] (l2) -- (lpp); 

\draw[eq,<->] (s0) -- node[left] {$\cong$} (l0);
\draw[eq,->] (sp) -- node[right] {$\sqsubseteq$} (lp);
\draw[eq,<->] (s1p) -- node[left] {$\cong$} (l2);
\draw[eq,->] (lpp) -- node[right] {$\sqsubseteq$} (spp); 

\draw[eq,->] (lp) -- node[above] {$\sqsubseteq$} (lpp); 
\end{tikzpicture}
\end{center}

\begin{proof}
Given $\rS{S}{\sigma^{init}_S}{\sigma_S}$, and a $\tuple{E_o, I_o} \in \observable{\sigma^{init}_S}$, it suffices to show $\tuple{E_o, I_o} \in \observable{\sigma_S}$, since we trivially have $\observable{\sigma_S} \subseteq \observable{\sigma^{init}}$.

$\subsumes{\stack{\sigma^{init}_{\idealSolverID}}}{\stack{\sigma^{init}_S}}$ since both stacks are empty.
Then through $\simulates{\idealSolver}{S}$, we have a $\sigma_{\idealSolverID}^+$ with $\rS{\idealSolverID}{\sigma^{init}_{\idealSolverID}}{\sigma_{\idealSolverID}^+}$ and $\subsumes{\stack{\sigma_{\idealSolverID}^+}}{\stack{\sigma_S}}$.
Since \idealSolver can fully backtrack and push the same equalities with fewer instantiations, we also have a $\sigma_{\idealSolverID}$ with $\rS{\idealSolverID}{\rS{\idealSolverID}{\sigma^{init}_{\idealSolverID}}{\sigma_{\idealSolverID}^+}}{\sigma_{\idealSolverID}}$, $I(\sigma_S) = I(\sigma_{\idealSolverID})$, and $E(\sigma_S) \subseteq E(\sigma_{\idealSolverID})$. And since pushing equalities is always allowed, $\sigma_S$ can transition to a $\sigma^{+}_{s}$ that includes the missing equalities, and so $\subsumes{\stack{\sigma^{+}_{s}}}{\stack{\sigma_{\idealSolverID}}}$

By the definition of $\observable{}$, there is a $\sigma_S'$ with $\rS{S}{\sigma^{init}_S}{\sigma_S'}$, $E_o \subseteq E(\stack{\sigma_S'})$ and $I_o \subseteq I(\stack{\sigma_S'})$.
Since $\simulates{\idealSolver}{S}$, we have a $\sigma_{\idealSolverID}'$ with $\rS{\idealSolverID}{\sigma^{init}_{\idealSolverID}}{\sigma_{\idealSolverID}'}$ and $\subsumes{\stack{\sigma_{\idealSolverID}'}}{\stack{\sigma_S'}}$.

Since \idealSolver is confluent, we have a $\sigma_{\idealSolverID}''$ with $\rS{\idealSolverID}{\sigma_{\idealSolverID}}{\sigma_{\idealSolverID}''}$ and $E(\stack{\sigma_{\idealSolverID}''}) \supseteq E(\stack{\sigma_{\idealSolverID}'}) \supseteq E_o$ and similarly $I(\stack{\sigma_{\idealSolverID}''}) \supseteq I_o$.
Then, since $\simulates{S}{\idealSolver}$ and $\subsumes{\stack{\sigma_S}}{\stack{\sigma_{\idealSolverID}}}$, there is a $\sigma_S''$ with $\rS{S}{\sigma_S}{\sigma_S''}$ and $\subsumes{\stack{\sigma_s''}}{\stack{\sigma_{\idealSolverID}''}}$.
Then $E_o \subseteq E(\stack{\sigma_{\idealSolverID}''})$ and $I_o \subseteq I(\stack{\sigma_{\idealSolverID}''})$ and so $\tuple{E_o,I_o} \in \observable{\sigma_S}$.

\end{proof}

\subsection{Refinement 1: A Naive Solver -- \naiveSolver}
The most straightforward (and slow) approach is to compute all generation numbers at matching time using $I(\tau)$, the current instance set.

The naive solver maintains the same state as the ideal. It uses $\enabled{E}{I}$ instead of $\enabled{E}{\instUniv}$:
\[
\begin{array}{c}
\AxiomC{$\qInst{q}{\instantiatedTerm} \in \enabled{E}{I}$}
\RightLabel{($\PushInst_{\naiveSolverID}$)}
\UnaryInfC{$\tNaive{\sNaive}{\state{\tau \cdot \instEvent{\qInst{q}{\instantiatedTerm}}}}$}
\DisplayProof
\end{array}
\]
All other transitions remain unchanged.

\newcommand{\IH}{\mathrm{IH}}

\begin{theorem}
\label{theorem:n-sim-id}
$\simulates{\naiveSolver}{\idealSolver}$.
\end{theorem}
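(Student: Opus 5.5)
The approach is a case analysis on the \idealSolver step. Unfolding the definition, $\simulates{\naiveSolver}{\idealSolver}$ says: for reachable $\sigma_{\naiveSolverID}$ and $\sigma_{\idealSolverID}$ with $\subsumes{\stack{\sigma_{\naiveSolverID}}}{\stack{\sigma_{\idealSolverID}}}$, and any step $\tS{\idealSolverID}{\sigma_{\idealSolverID}}{\sigma'_{\idealSolverID}}$, we must find $\sigma'_{\naiveSolverID}$ with $\rS{\naiveSolverID}{\sigma_{\naiveSolverID}}{\sigma'_{\naiveSolverID}}$ whose trail subsumes $\stack{\sigma'_{\idealSolverID}}$. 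I would treat the four rules of \idealSolver separately.

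\emph{Pop steps.} For \PopEq and \PopInst, the new ideal trail has $E$ and $I$ contained in the old ones, so $\stack{\sigma_{\naiveSolverID}}$ already subsumes it. If $\leadsto$ is meant strictly, \naiveSolver can push an arbitrary equality and pop it again to return to $\sigma_{\naiveSolverID}$.

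\emph{\PushEq.} \naiveSolver pushes the same equality $e$. Subsumption is preserved, since both $E$-sets gain $e$ and the $I$-sets are unchanged.

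\emph{\PushInst.} This is the only substantive case. Suppose $\qInst{q}{\instantiatedTerm} \in \enabled{E}{\instUniv}$ with $E = E(\stack{\sigma_{\idealSolverID}})$. Let $g = \genideal{E}{\instantiatedTerm}$, so $\weight{q} + g \le \maxCost$, and fix a derivation tree $D$ of $\derivable{E}{\instUniv}{\instantiatedTerm}{g}$. I will use three facts.
\begin{enumerate}
\item \emph{Monotonicity:} if $E \subseteq E'$ and $I \subseteq I'$, then $\derivableEI{t}{h}$ implies $\derivable{E'}{I'}{t}{h}$. This follows by induction on derivations, since $\cong_E$ is monotone in $E$.
\item \emph{Bounded subderivations:} every subderivation of $D$ concludes a value at most $g$. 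This holds because weights are non-negative in Inst and because Congruence takes a $\max$.
\item \emph{Consequence for Inst nodes:} every Inst node of $D$, using $\qInst{q'}{\instantiatedTerm'}$ with premise $\derivable{E}{\instUniv}{\instantiatedTerm'}{g'}$, satisfies $\weight{q'} + g' \le g \le \maxCost$.
\end{enumerate}

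Using these, \naiveSolver pushes the instances appearing at Inst nodes of $D$ in a bottom-up order, for instance post-order. The induction invariant is that, once all instances in a subderivation $D'$ concluding $\derivable{E}{\instUniv}{t}{h}$ have been pushed, the current trail $\tau$ satisfies $\derivable{E(\tau)}{I(\tau)}{t}{h}$. The proof replays $D'$ with $I(\tau)$ in place of $\instUniv$; this is legitimate because every instance it uses is now in $I(\tau)$, and because $E \subseteq E(\tau)$ lets monotonicity transport the $\cong_E$ premises.

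Hence, when we reach an Inst node for $\qInst{q'}{\instantiatedTerm'}$, its premise subderivation is already replayed, so $\gen{E(\tau)}{I(\tau)}{\instantiatedTerm'} \le g'$. By fact 3, $\qInst{q'}{\instantiatedTerm'} \in \enabled{E(\tau)}{I(\tau)}$, so $\PushInst_{\naiveSolverID}$ applies. If the instance is already present, it can be skipped or pushed again, since $I$ is a set.

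After the whole of $D$ is processed, $\gen{E(\tau)}{I(\tau)}{\instantiatedTerm} \le g$, so $\qInst{q}{\instantiatedTerm}$ is enabled and \naiveSolver pushes it. The resulting trail contains everything in the old \naiveSolver trail plus $\qInst{q}{\instantiatedTerm}$. It therefore subsumes $\stack{\sigma'_{\idealSolverID}}$.

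I expect the main obstacle to be stating the replay invariant precisely. Congruence nodes reference terms $t[s_1]$ that are derived rather than input, and one must check that replaying with a finite, growing $I(\tau)$ yields exactly the same derivations. The cleanest way I see is to prove one general lemma: $D$ remains a valid derivation for any $(E', I')$ with $E \subseteq E'$ that contains every instance used in $D$, with the same conclusion values. Everything else in the argument is bookkeeping.
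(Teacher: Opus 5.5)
Your proposal is correct and follows the paper's proof. The push/pop cases are trivial, and \PushInst is simulated by pushing, leaves first, every instance used in a derivation of the matched term's ideal generation; the paper's appendix formalizes this as an induction on the derivation that merges two sub-trails at \emph{Congruence} nodes, whereas you traverse the tree in post-order and rely on a single monotonicity lemma. That monotonicity in $E$ also correctly handles the fact that subsumption only gives $E(\stack{\sigma_{\idealSolverID}}) \subseteq E(\stack{\sigma_{\naiveSolverID}})$, which the paper treats as an equality.
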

\begin{proof}
\PushEq can be trivially simulated by the corresponding step in \naiveSolver.
\PopEq and \PopInst can be simulated without any transitions.
\PushInst can be simulated by a sequence of $\PushInst_{\naiveSolverID}$ steps: one for each quantifier instance used in the inference of the matching term's ideal generation, starting at the leaves. We present a detailed proof for the \PushInst case in Appendix \ref{appendix:proof:n-sim-id}.
\end{proof}

\begin{theorem}
$\simulates{\idealSolver}{\naiveSolver}$
\end{theorem}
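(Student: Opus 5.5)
The plan is to simulate each step of \naiveSolver by at most one step of \idealSolver. The only non-trivial case is $\PushInst_{\naiveSolverID}$, which reduces to a monotonicity property of reachable generations. Fix reachable states $\sigma_{\naiveSolverID} = \state{\tau_N}$ and $\sigma_{\idealSolverID} = \state{\tau_I}$ with $\subsumes{\tau_I}{\tau_N}$, that is, $E(\tau_N) \subseteq E(\tau_I)$ and $I(\tau_N) \subseteq I(\tau_I)$, together with a step $\tNaive{\sigma_{\naiveSolverID}}{\sigma'_{\naiveSolverID}}$. In each case we exhibit $\sigma'_{\idealSolverID}$ with $\rS{\idealSolverID}{\sigma_{\idealSolverID}}{\sigma'_{\idealSolverID}}$ and $\subsumes{\stack{\sigma'_{\idealSolverID}}}{\stack{\sigma'_{\naiveSolverID}}}$. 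For the zero-step cases we read $\leadsto$ reflexively, as the proof of Theorem~\ref{theorem:n-sim-id} already does.

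The first step is a monotonicity lemma: if $E \subseteq E'$ and $I \subseteq I'$, then $\derivableEI{t}{g}$ implies $\derivable{E'}{I'}{t}{g}$. We prove it by induction on the derivation.
\begin{itemize}
\item (\Input) is independent of $E$ and $I$.
\item (\Inst) uses a premise $\qInst{q}{\instantiatedTerm} \in I$, which then also lies in $I'$.
\item (\Congruence) uses $s_1 \cong_E s_2$. Equivalence derived from $E$ remains derivable from the superset $E'$, since the rules of the theory of equality are monotone in their hypotheses.
\end{itemize}
As a consequence, $\gen{E'}{I'}{t} \le \gen{E}{I}{t}$, and hence $\enabled{E}{I} \subseteq \enabled{E'}{I'}$. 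Instantiating this with $I' = \instUniv \supseteq I(\tau_N)$ and $E' = E(\tau_I)$ gives $\enabled{E(\tau_N)}{I(\tau_N)} \subseteq \enabled{E(\tau_I)}{\instUniv}$.

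The cases then go as follows.
\begin{itemize}
\item For \PushEq of $e$, \idealSolver also pushes $\eqEvent{e}$. Both $E$-sets grow by $\{e\}$ and the $I$-sets are unchanged, so subsumption is preserved.
\item For $\PushInst_{\naiveSolverID}$ of $\qInst{q}{\instantiatedTerm} \in \enabled{E(\tau_N)}{I(\tau_N)}$, the inclusion above shows that $\qInst{q}{\instantiatedTerm} \in \enabled{E(\tau_I)}{\instUniv}$. Hence \idealSolver may take \PushInst with the same instance, and both $I$-sets gain the same element.
\item For \PopEq and \PopInst, \idealSolver takes no step. Popping only removes an event from $\tau_N$, so $E(\tau_N)$ and $I(\tau_N)$ can only shrink, and $\tau_I$ still subsumes the new trail.
\end{itemize}

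I do not expect a serious obstacle. The one point needing care is the (\Congruence) case of the monotonicity lemma. It must rely on equivalence $\cong_E$, which is monotone in $E$, and not on a set-level notion that could fail to transfer. A second subtlety is that trails may contain duplicate events. This is harmless because subsumption is stated on the derived sets $E(\cdot)$ and $I(\cdot)$, and a pop can never enlarge them.
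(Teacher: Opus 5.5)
Your proof is correct and matches the paper's: $\PushEq_{\naiveSolverID}$ and $\PushInst_{\naiveSolverID}$ are each simulated by the corresponding \idealSolver step, and the two pops by no step at all. The only difference is that you state explicitly the monotonicity of $\vdash_{E,I}$ in $E$ and $I$. The paper's one-line proof relies on this silently to obtain $\enabled{E(\tau_N)}{I(\tau_N)} \subseteq \enabled{E(\tau_I)}{\instUniv}$.
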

\begin{proof} 
$\PushEq_{\naiveSolverID}$ and $\PushInst_{\naiveSolverID}$ can each be simulated by the corresponding step in \idealSolver. $\PopEq_{\naiveSolverID}$ and $\PopInst_{\naiveSolverID}$ can be simulated without any transitions.
\end{proof}

\subsection{Refinement 2: Generation-Tracking Solver -- \trackingSolverID}
The next step is to avoid recomputing $\gen{E}{I}{t}$ from scratch at matching time.
Instead, we maintain a dynamic mapping from terms to generation numbers.

Generation tracking states are of the form
$\trackingSolverSig : \state{\tau, g}$
where 
\begin{itemize}
    \item $\tau, E, I$ have the same meaning as in $\idealSolverSig$
    \item $\genMap: \Term \to \Nat$ - mapping used to track generation numbers
\end{itemize}
Initially, $\genMap(t)=0$ if $t \in A$, and $\genMap(t)=\infty$ otherwise.

\subsubsection{Handling equivalent instances:}
Real-world solvers avoid creating equivalent instantiations: given $\instantiatedTerm_1, \instantiatedTerm_2$ with $\cg{\instantiatedTerm_1}{\instantiatedTerm_2}$, for any $q$, the solver instantiates at most one of $\qInst{q}{\instantiatedTerm_1}$ or $\qInst{q}{\instantiatedTerm_2}$, since it judges the two instances to be equivalent. This complicates generation tracking, as the generation updates resulting from a single quantifier instance must emulate those of the whole set of equivalent instances it stands in for.

Our model allows equivalent instantiations, so this requirement could be ignored without sacrificing completeness or conservativeness. We nonetheless reflect it in the design of the generation-tracking solver, through the definition of tracked generation:

\begin{definition}[Tracked generation]
The generation for term $t$ used by $\trackingSolver$ is
\[
\genTracked{E}{\genMap}{t} \defineEq \max\{ \min \{\genMap(c) \mid \cg{c}{s}\} \mid \isSubterm{s}{t}\}
\]
\end{definition}
\begin{definition}[Enabled instances for \trackingSolver]
\[
\enabledTr{E}{\genMap} \defineEq \{\qInst{q}{\instantiatedTerm} \mid \weight{q} + \genTracked{E}{\genMap}{\instantiatedTerm} \leq \maxCost, 
q \in Q, \instantiatedTerm \mbox{ is a term}\}
\]
\end{definition}

\begin{definition}[Generation updates] 
A quantifier instance $\qInst{q}{\instantiatedTerm}$ causes $\genMap$ to be updated as follows:
\[
\updateTr{E}{\genMap}{\qInst{q}{\instantiatedTerm}} \defineEq 
\mapMultiUpdate{g}{t}
{\min(\genMap(t), \weight{q} + \genTracked{E}{\genMap}{\instantiatedTerm})}
{t \in \qInst{q}{\instantiatedTerm}}
\]
\end{definition}

Generation updates are sticky, meaning we do not undo them when we pop instances.
Doing so would not only require additional backtracking state, but also make the solver re-discover the same updates later, which is expensive in practice.

\subsubsection{Modified transitions and definitions}
\[
\begin{array}{c}
\AxiomC{$\qInst{q}{\instantiatedTerm} \in \enabledTr{E}{\genMap}$}
\RightLabel{(\PushInst$_{\trackingSolverID}$)}
\UnaryInfC{$\tTr{\sTr}{\state{\tau \cdot \instEvent{\qInst{q}{\instantiatedTerm}}, \updateTr{E}{\genMap}{\qInst{q}{\instantiatedTerm}}}}$}
\DisplayProof
\end{array}
\]

All other transitions remain unchanged except state augmentation with $g$.

\begin{theorem}
\label{theorem:t-sim-naive}
$\simulates{\trackingSolver}{\naiveSolver}$.
\end{theorem}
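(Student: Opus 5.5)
The plan is a case analysis on the \naiveSolver transition $\tS{\naiveSolverID}{\sigma_N}{\sigma_N'}$. Throughout, the \trackingSolver state $\sigma_T = \state{\tau_T, \genMap}$ satisfies $\subsumes{\tau_T}{\tau_N}$.

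\PushEq is matched by the identical $\PushEq$ step of \trackingSolver, which preserves subsumption. $\PopEq$ and $\PopInst$ are matched by taking no step at all, since shrinking $\tau_N$ keeps $\subsumes{\tau_T}{\tau_N}$. The only substantive case is $\PushInst_{\naiveSolverID}$ of some $\qInst{q}{\instantiatedTerm}$ with $\weight{q} + \gen{E_N}{I_N}{\instantiatedTerm} \leq \maxCost$, where $E_N = E(\tau_N)$ and $I_N = I(\tau_N)$. By subsumption $E_N \subseteq E_T$ and $I_N \subseteq I_T$, and the derivation rules are monotone in $E$ and $I$. Hence a derivation $D$ of $\derivable{E_N}{I_N}{\instantiatedTerm}{g_0}$ with $g_0$ minimal is also a derivation over $E_T, I_T$.

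The difficulty is that $\genMap$ may lag behind $D$. Its values were computed when instances were pushed, possibly under fewer equalities. Following the proof of Theorem~\ref{theorem:n-sim-id}, I would therefore let \trackingSolver \emph{replay} $D$. Traverse $D$ bottom-up, and for every (Inst) node with instance $\qInst{q'}{\instantiatedTerm'}$ perform $\PushInst_{\trackingSolverID}$ of that instance again. This is allowed because the model permits duplicate and equivalent instances, and it only adds to $I_T$. Finally push $\qInst{q}{\instantiatedTerm}$ itself. The resulting trail contains $\tau_N \cdot \instEvent{\qInst{q}{\instantiatedTerm}}$ up to subsumption, as required.

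To justify that each replayed push is enabled, I would prove the following invariant by induction on $D$. After the replay has processed a node concluding $\derivableEI{t}{g}$, every subterm $s$ of $t$ has some $c$ with $\cgParam{E_T}{c}{s}$ and $\genMap(c) \leq g$, so that $\genTracked{E_T}{\genMap}{t} \leq g$. Two monotonicity facts keep the invariant stable under later steps: $\genMap$ only decreases, because updates take a $\min$ and are sticky; and $E_T$ does not change during the replay.

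The induction has three cases.
\begin{itemize}
\item (Input): $A$ is subterm-closed and $\genMap$ is initially $0$ on $A$, so each subterm is its own witness with value $0$.
\item (Inst): by induction $\genTracked{E_T}{\genMap}{\instantiatedTerm'} \leq g$. Hence the instance is in $\enabledTr{E_T}{\genMap}$ whenever $\weight{q'} + g \leq \maxCost$. This bound holds at every internal node because generations along $D$ are bounded by $g_0$ and weights are non-negative, and for the root it holds by hypothesis. The update then sets $\genMap(t) \leq \weight{q'} + g$ for every $t \in \qInst{q'}{\instantiatedTerm'}$, including all subterms of the body.
\item (Congruence): this is the step I expect to be the main obstacle. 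Subterms of $t[s_2]$ either lie inside $s_2$, where the induction hypothesis with bound $g'$ applies, or are subterms of $t[s_1]$ not containing the hole, where the hypothesis for $t[s_1]$ applies. The remaining subterms have the form $u[s_2]$, with the hole somewhere inside. For these I would argue that $u[s_2]$ and $u[s_1]$ have the same head symbol and $E_T$-equivalent arguments, since $s_1 \cong_{E_T} s_2$ is closed under the congruence rule for equivalence. Hence $u[s_2]$ is congruent to $u[s_1]$, and thus to the witness $c$ of $u[s_1]$, using that congruence composes with equivalence of arguments. This gives the bound $\max(g, g')$.
\end{itemize}
The care needed here is exactly the distinction between congruence and equivalence that the paper emphasizes. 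In particular, I must check that the witness relation stays within the $\mathit{congruent}$ relation rather than mere equivalence.
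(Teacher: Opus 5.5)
Your proposal is correct and follows the paper's proof: \PushEq, \PopEq, and \PopInst are matched trivially, and $\PushInst_{\naiveSolverID}$ is matched by replaying the derivation of the matched term's generation. That replay is exactly the induction in Lemma~\ref{lemma:tr-finds-generations}, including the same three-way split of subterms in the (Congruence) case. Your proof is also more careful than the paper's in two places: you check explicitly that every replayed instance stays within $\maxCost$ because generations along the derivation are bounded by $g_0$, and you use a bottom-up invariant that stays valid because $\genMap$ only decreases, instead of composing two existential induction hypotheses. The paper's lemma is stated without a threshold hypothesis and leaves both points implicit.
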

\begin{proof} Simulating \PushEq, \PopEq, and \PopInst is trivial (see the proof of Theorem \ref{theorem:n-sim-id}).
Lemma \ref{lemma:tr-finds-generations} in the appendix shows that $\PushInst_{\naiveSolverID}$ can be simulated by a sequence of $\PushInst_{\trackingSolverID}$ steps.
\end{proof}

\begin{theorem}
$\simulates{\idealSolver}{\trackingSolver}$.
\end{theorem}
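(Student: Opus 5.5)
We prove $\simulates{\idealSolver}{\trackingSolver}$ the same way as $\simulates{\idealSolver}{\naiveSolver}$: every step of \trackingSolver is matched by the same step of \idealSolver, or by no step at all. Let $\sigma_T = \state{\tau_T, \genMap}$ be reachable in \trackingSolver and $\sigma_{\idealSolverID} = \state{\tau}$ be reachable in \idealSolver, with $\subsumes{\tau}{\tau_T}$. For $\PushEq_{\trackingSolverID}$ we push the same equality in \idealSolver; subsumption is preserved. For \PopEq and \PopInst we take no step, since the new trail of \trackingSolver only loses events. For $\PushInst_{\trackingSolverID}$ of $\qInst{q}{\instantiatedTerm}$ we push the same instance in \idealSolver. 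This is legal provided $\qInst{q}{\instantiatedTerm} \in \enabled{E(\tau)}{\instUniv}$. The rest of the proof establishes that side condition.

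The key invariant, proved by induction on the length of the \trackingSolver run, is soundness of the tracked map. For every reachable $\state{\tau_T,\genMap}$ and every term $t$ with $\genMap(t) < \infty$,
\[
\genideal{\emptyset}{t} \leq \genMap(t),
\]
that is, $\genMap(t)$ is witnessed by a derivation using only the Input and Inst rules over $\instUniv$. It must hold with the empty equality set because updates are sticky and survive \PopEq. Initially $\genMap(t)=0$ exactly on $A$, which is the Input rule. Only $\PushInst_{\trackingSolverID}$ changes $\genMap$. It sets $\genMap(t)$ to $\weight{q} + \genTracked{E}{\genMap}{\instantiatedTerm}$ for $t \in \qInst{q}{\instantiatedTerm}$, and we obtain a derivation of that value with Inst from a derivation of $\instantiatedTerm$ at generation $\genTracked{E}{\genMap}{\instantiatedTerm}$.

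Here lies the main obstacle. $\mathrm{gen}^{\trackingSolverID}$ uses the current equalities $E$ through congruence, and these may later be popped. So the invariant as stated is too strong, and we strengthen it. Since $\instUniv$ contains \emph{all} instances, we claim the following: a derivation of $\genMap(t)$ can always be rebuilt from the equalities \emph{at the time of the update}. Because those equalities may be popped, we restate the invariant as ``there exists a set $E'$ of equalities and a derivation $\derivable{E'}{\instUniv}{t}{\genMap(t)}$''.

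That weaker form is not enough for the side condition, which needs the current $E(\tau)$. We would first try to show that congruence steps in a derivation can be eliminated. If $\cg{c}{s}$ is used to transfer $c$'s generation to $s$, then $s$ itself appears as a subterm in $A$ or in some instance body, reached by a derivation whose depth we must also bound; this bound is what fails without care. The alternative, which we expect the paper takes, is to record the needed equalities: we restrict $E'$ to equalities that are present in $E(\tau)$ whenever the term is used in a matching. This holds because $\genTracked{E}{\genMap}{\instantiatedTerm}$ applies congruence only under the current $E$ at match time. So we keep the invariant ``$\genMap(c)$ is derivable using $\instUniv$ and the equalities current when it was set'', and at match time we compose the derivations by the Congruence rule with the present $E$.

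Given this invariant, we close each subterm case. Fix $s \in \instantiatedTerm$ and a witness $c$ with $\cg{c}{s}$ and $\genMap(c)$ minimal. The Congruence rule under $E(\tau_T)\subseteq E(\tau)$ yields a derivation of $s$ at generation at most $\genMap(c)$. Taking the maximum over subterms gives $\genideal{E(\tau)}{\instantiatedTerm} \leq \genTracked{E(\tau_T)}{\genMap}{\instantiatedTerm}$, and hence $\qInst{q}{\instantiatedTerm} \in \enabled{E(\tau)}{\instUniv}$. We would check carefully that monotonicity of $\genideal{E}{\cdot}$ in $E$ is used in the right direction: more equalities can only lower the ideal generation.
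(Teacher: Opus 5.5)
\textbf{The gap is the one-step simulation of $\PushInst_{\trackingSolverID}$.} You plan to answer it with the identical single \PushInst step, and that fails. The step meant to make it work is restricting $E'$ to equalities still present in $E(\tau)$ at match time, and nothing justifies it. Sticky updates exist precisely so that $\genMap$ keeps values obtained under equalities that have since been popped. The fact that $\genTracked{E}{\genMap}{\instantiatedTerm}$ uses the current $E$ for its congruence lookup says nothing about the equalities behind the looked-up values $\genMap(c)$.

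Here is a counterexample. Let $A=\{a,b,f(a)\}$ and $\maxCost=1$. Let $q$ have pattern $f(x)$, weight $1$ and body $P(p(x))$, and let $q_2$ have pattern $p(x)$, weight $0$ and body $R(x)$. \trackingSolver can push $a\simeq b$ and then instantiate $\qInst{q}{f(b)}$; its tracked generation is $0$, since $f(b)$ is congruent to $f(a)$ under $a\simeq b$. This sets $\genMap(p(b))=1$. \trackingSolver then pops both events, and from the resulting empty trail $\qInst{q_2}{p(b)}$ is enabled. An \idealSolver state with empty trail subsumes this state. Yet $\genideal{\emptyset}{p(b)}=\infty$, because $f(b)$ cannot be derived without $a\simeq b$. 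Your no-op treatment of pops would keep $a\simeq b$ on an \idealSolver trail run in lockstep, but the definition of simulation quantifies over all reachable pairs related by subsumption. So the side condition you set out to prove is false, and no invariant can rescue a one-step simulation.

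\textbf{The missing idea:} the simulating solver may take several steps, and \PushEq is unconditional in \idealSolver. You actually wrote down the right invariant and then discarded it. It is a single set $E'$, in effect the union of all equalities asserted along the \trackingSolver run, with $\genideal{E'}{t}\le\genTracked{E(\tau_T)}{\genMap}{t}$ for all $t$; this is \lemmaRef{trackedgen-lowerbound}.

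The paper simulates $\PushInst_{\trackingSolverID}$ by first pushing every equality of $E'$ and only then pushing $\qInst{q}{\instantiatedTerm}$. This works for two reasons:
\begin{itemize}
\item Extra equalities can only lower ideal generations, so $\weight{q}+\genideal{E(\tau)\cup E'}{\instantiatedTerm}\le\maxCost$.
\item Extra equalities on the \idealSolver trail do not break subsumption.
\end{itemize}
Your per-term ``there exists $E'$'' must also be made uniform in exactly this way, since $\genTracked{E}{\genMap}{\instantiatedTerm}$ combines values of $\genMap$ set at different times under different equalities. Your handling of \PushEq, \PopEq and \PopInst is fine.
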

\begin{proof}
Since
$\PushEq_{\trackingSolverID}$, $\PopEq_{\trackingSolverID}$, and $\PopInst_{\trackingSolverID}$ are unchanged, simulating them remains trivial. By \lemmaRef{trackedgen-lowerbound} in the appendix, there exists an $E'$ such that $\forall t. \; \genideal{E'}{t} \leq  \genTracked{E(\tau)}{g}{t}$. A $\PushInst_{\trackingSolverID}$ step can be simulated by first pushing all the equalities in $E'$,  then taking a $\PushInst$ step.
\end{proof}

\subsection{Refinement 3: Congruence-Class-Tracking Solver -- $\cgcSolverID$}
\trackingSolver is still inefficient because computing $\min \{\genMap(c) \mid \cg{c}{t}\}$ is expensive for large congruence classes (\secref{eval} shows this empirically). The next version avoids iterating over congruence classes at match time by maintaining one generation number per congruence class.

Solver state is now a tuple
$\cgcSolverSig : \sCGC$
where
\begin{itemize}
    \item $E, I$ have the same meaning as in $\naiveSolverSig$ and $\trackingSolverSig$    
    \item $\cgcMap: \Term \to \Set(\Term)$ - maps a term to its congruence class
    \item $\genMapCGC: \Set(\Term) \to \Nat$ - maps a congruence class to its generation number.
    \item We replace $\eqEvent{e}$ by $\eqMergeEvent{e}{\cgcMap}{\genMapCGC}$ to model backtracking to a snapshot
    \item $\stickyUpdates: \Set(\Term \times \Nat)$ - cumulative set of sticky generation updates
\end{itemize}
Initially,  
\begin{itemize}
    \item $\stickyUpdates = \emptyset$
    \item $\cgcMap(t) = \{t\}$, for all $t$.
    \item $\genMapCGC(\{t\}) = 0$ if $t \in A$ and $\genMapCGC(\cdot) = \infty$ otherwise, for all $t$.
\end{itemize}
 
\cgcSolver needs to explicitly keep track of congruence classes, which SMT solvers readily do. 
Congruence classes are merged, modifying $\cgcMap$, as a result of \PushEq{} and unmerged as a result of \PopEq. We make sure $\genMapCGC$ is updated accordingly, such that it always maps to the generation of the minimal term in the class.

\begin{definition}[Merging congruence classes]
To track merging of congruence classes and updates to generations we
introduce the relation 
$\tCGCMerge
    {\tuple{\cgcMap, \genMapCGC}}
    {E}
    {\tuple{\cgcMap', \genMapCGC'}}
    $.
It can be obtained from the reflexive-transitive closure of the inference:
\[
\begin{array}{c}

\AxiomC{$\cg{t_1}{t_2}$}
\AxiomC{$X\in \cgcMap(t_1), Y \in \cgcMap(t_2)$}
\AxiomC{$\cgcMap'=\mapMultiUpdate{\cgcMap}{t}{X \cup Y}{t \in X \cup Y}$}
\TrinaryInfC{$
\tCGCMerge
    {\tuple{\cgcMap, \genMapCGC}}
    {E}
    {\tuple{
        \cgcMap', 
        \genMapCGC[X \cup Y \mapsto \min(\genMapCGC(X), \genMapCGC(Y))]}}$}
\DisplayProof
\end{array}
\]
\end{definition}

\begin{definition}[Congruence-class-tracked generation]
The effective generation of a term $t$ for $\cgcSolver$ is:
\[
\genCGC{\cgcMap}{\genMapCGC}{t} \defineEq \max\{  \genMapCGC(\cgcMap(s)) \mid s\in t\}
\]
\end{definition}

$\genMapCGC$ is analogous to $\genMap$ in $\trackingSolver$. It is therefore updated similarly, in response to quantifier instantiations:
\begin{definition}[Generation updates for \cgcSolver]
\[
\updateCGC{\cgcMap}{\genMapCGC}{\qInst{q}{\instantiatedTerm}} 
\defineEq 
\mapMultiUpdate{\genMapCGC}{\cgcMap(t)}
{\min(\genMapCGC(\cgcMap(t)), \weight{q} + \genCGC{\cgcMap}{\genMapCGC}{\instantiatedTerm})}
{t\in \qInst{q}{\instantiatedTerm}}
\]
\end{definition}

One complication that arises is that naively backtracking congruence-class merges may undo generation updates resulting from quantifier instantiation, which are meant to be sticky. It is crucial to prevent this, as re-discovering lower generations for existing terms is expensive in practice. To this end, we explicitly store a cumulative set of generation updates, which are re-applied on \PopEq.

\begin{definition}[New sticky updates caused by an instantiation]
Each term $t$ re-discovered at a lower generation through a quantifier instance $\qInst{q}{\instantiatedTerm}$ is tagged with a sticky update in $\newStickyUpdates{\qInst{q}{\instantiatedTerm}}{\cgcMap}{\genMapCGC}{\genMapCGC'}$:
\[
\newStickyUpdates{\qInst{q}{\instantiatedTerm}}{\cgcMap}{\genMapCGC}{\genMapCGC'} \defineEq \{\tuple{t, \genMapCGC'(\cgcMap(t))} \mid t\in \qInst{q}{\instantiatedTerm} \land \genMapCGC'(\cgcMap(t)) < \genMapCGC(\cgcMap(t)) < \infty\}
\]
\end{definition}

\begin{definition}[Replaying sticky updates]
Applying the updates stored in $\stickyUpdates$ to $\sCGC$ leads to $\state{\tau, \cgcMap, \genMapCGC', \stickyUpdates}$ if 
$\tApplyUpdates{\cgcMap}{\stickyUpdates}
    {\genMapCGC}
    {\genMapCGC'}$ can be obtained from one of the inferences:
\[
\begin{array}{c}
\AxiomC{}
\RightLabel{(\text{\mbox{ApplyNone}})}
\UnaryInfC{$
\tApplyUpdates{\cgcMap}{\emptyset}
    {\genMapCGC}
    {\genMapCGC}$}
\DisplayProof
\\
\\
\AxiomC{$
\tApplyUpdates{\cgcMap}{\stickyUpdates}
    {\genMapCGC}
    {\genMapCGC'}$}
\RightLabel{(ApplySome)}
\UnaryInfC{$
\tApplyUpdates{\cgcMap}{\stickyUpdates
\cup \{\tuple{t, n}\}}
    {\genMapCGC}
    \genMapCGC'[(\cgcMap(t))
    \mapsto 
    \min(\genMapCGC'(\cgcMap(t)), n)]$}
\DisplayProof
\end{array}
\]
\end{definition}

\begin{definition}[Enabled instances for \cgcSolver]
\[
\enabledCGC{\cgcMap}{\genMapCGC}  
\defineEq
\{\qInst{q}{\instantiatedTerm} \mid \weight{q} + \genCGC{\cgcMap}{\genMapCGC}{\instantiatedTerm} \leq \maxCost, q \in Q, \instantiatedTerm \mbox{ is a term}\}
\]
\end{definition}

We now present the transition rules for \cgcSolver:

\[
\begin{array}{c}
\AxiomC{$e \defineEq (t_1 \simeq t_2)$}
\AxiomC{$\tCGCMerge{\tuple{\cgcMap, \genMapCGC}}{E \cup \{e\}}{\tuple{\cgcMap', \genMapCGC'}}$}
\RightLabel{(\PushEq$_{\cgcSolverID}$)}
\BinaryInfC{$\tCGC{\sCGC}{\state{\tau\cdot\eqMergeEvent{e}{\cgcMap}{\genMapCGC}, \cgcMap', \genMapCGC', \stickyUpdates}}$}
\DisplayProof
\\
\\
\AxiomC{$\tApplyUpdates{\cgcMap}{\stickyUpdates}{\genMapCGC'}{\genMapCGC''}$}
\RightLabel{(\PopEq$_{\cgcSolverID}$)}
\UnaryInfC{$
\tCGC
    {\state{\tau \cdot \eqEvent{e, \cgcMap', \genMapCGC'}, \cgcMap, \genMapCGC, \stickyUpdates}}
    {\state{\tau, \cgcMap', \genMapCGC'', \stickyUpdates}}$}
\DisplayProof
\\
\\
\AxiomC{$\qInst{q}{\instantiatedTerm} \in \enabledCGC{\cgcMap}{\genMapCGC}$}
\AxiomC{$\genMapCGC'=\updateCGC{\cgcMap}{\genMapCGC}{\qInst{q}{\instantiatedTerm}}$}
\RightLabel{(\PushInst$_{\cgcSolverID}$)}
\BinaryInfC{$
\tCGC
    {\sCGC}
    {\state{ \tau \cdot \instEvent{\qInst{q}{\instantiatedTerm}}, \cgcMap, \genMapCGC', \stickyUpdates \cup \newStickyUpdates{\qInst{q}{\instantiatedTerm}}{\cgcMap}{\genMapCGC}{\genMapCGC'} }}$}
\DisplayProof
\\
\\
\AxiomC{$$}
\RightLabel{(\PopInst$_{\cgcSolverID}$)}
\UnaryInfC{$\tCGC{\state{ \tau\cdot \instEvent{\qInst{q}{\instantiatedTerm}}, \cgcMap, \genMapCGC, \stickyUpdates}} {\state{ \tau, \cgcMap, \genMapCGC, \stickyUpdates}}$}
\DisplayProof
\\
\\
\end{array}
\]

\begin{lemma}
\label{lemma:cgc-tracking-eq}
Every transition sequence executable from the initial state of \(\cgcSolver\) is executable from the initial state of \(\trackingSolver\), and vice versa; moreover, the resulting states satisfy
\[
\forall t. \; \genMapCGC(\cgcMap(t)) = \min\{\genMap(c) \mid \cg{c}{t}\}
\]
\end{lemma}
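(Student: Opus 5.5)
I will prove the statement by induction on the length of the transition sequence. The invariant carried through the induction is stronger than the displayed equation. I will show simultaneously three things:
\begin{enumerate}
\item the two solvers have the same trail $\tau$, modulo replacing $\eqEvent{e}$ by $\eqMergeEvent{e}{\cgcMap}{\genMapCGC}$;
\item $\cgcMap(t)$ is exactly the $\cong_{E(\tau)}$-class of $t$;
\item $\genMapCGC(\cgcMap(t)) = \min\{\genMap(c) \mid \cg{c}{t}\}$ for all $t$.
\end{enumerate}
I will also add an invariant relating $\stickyUpdates$ to $\genMap$: for every $t$, $\genMap(t) = \min\big(\genMap_0(t), \min\{n \mid \tuple{t,n}\in\stickyUpdates\}\big)$, where $\genMap_0$ is the initial map. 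Since $\genMap$ only ever decreases, the pairs in $\stickyUpdates$ record all of its non-initial values. The base case is immediate: $\cgcMap(t)=\{t\}$ and the two initial generation maps agree.

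With the invariant in hand, I first note that it gives $\genCGC{\cgcMap}{\genMapCGC}{\instantiatedTerm} = \genTracked{E}{\genMap}{\instantiatedTerm}$ directly from the two definitions, reading the class of $s$ in the tracked definition as the equivalence class used by $\cgcMap$. Hence $\enabledCGC{\cgcMap}{\genMapCGC} = \enabledTr{E}{\genMap}$, so the same $\PushInst$ steps are enabled in both solvers. For a $\PushInst$ step both updates use the same value $v = \weight{q} + \genCGC{\cgcMap}{\genMapCGC}{\instantiatedTerm}$. For a class $X$ containing new terms $t\in\qInst{q}{\instantiatedTerm}$, the new value $\min(\genMapCGC(X), v)$ equals the minimum over $c\in X$ of $\min(\genMap(c), v)$ taken over the updated $c$. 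The sticky set grows exactly by the terms whose tracked minimum decreased. Care is needed for terms whose own $\genMap$ value drops while their class minimum does not; there the sticky invariant may need weakening to an inequality that still suffices.

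$\PopInst$ touches neither map in either solver, so it is trivial. For $\PushEq$, $\cgcMap$ changes by the closure of $\tCGCMerge{}{E\cup\{e\}}{}$, while $\genMap$ is unchanged. Each merge takes the minimum of the two classes' values, so by induction on merge steps the new class value is the minimum of $\genMap$ over the union. This covers every class of $\cong_{E\cup\{e\}}$, since merging is driven by congruence until the partition stabilises.

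I expect $\PopEq$ to be the main obstacle. In \trackingSolver, $\genMap$ is untouched and the partition simply returns to the $E(\tau)$ classes. In \cgcSolver, $\cgcMap$ is restored to the snapshot, but $\genMapCGC$ is restored to the snapshot $\genMapCGC'$ and then patched by replaying $\stickyUpdates$. I need to show three things:
\begin{itemize}
\item the snapshot satisfies the invariant with respect to the old $\genMap$, by the induction hypothesis at the time of the push;
\item every decrease of $\genMap$ since then is witnessed by some $\tuple{t,n}\in\stickyUpdates$, by the sticky invariant;
\item replaying takes $\genMapCGC'(\cgcMap'(t))$ down to $\min(\genMapCGC'(\cgcMap'(t)), \min\{n\mid \tuple{c,n}\in\stickyUpdates,\ c\in\cgcMap'(t)\})$, which equals $\min\{\genMap(c)\mid c\in\cgcMap'(t)\}$.
\end{itemize}
The last point follows by a small induction on the ApplySome derivation, using that replay order is irrelevant because $\min$ is commutative and idempotent. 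Stale sticky entries recorded before the snapshot are harmless because they are $\geq$ the current $\genMap$ value.

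The hardest part is stating the sticky invariant precisely enough that it is preserved by $\PushInst$, given the subtlety above. My first attempt would be the inequality form $\genMap(c)=\min(\genMap_0(c),\min\{n\mid\tuple{c',n}\in\stickyUpdates, c'\cong c\})$, quantified over all later $E$ rather than the current one.
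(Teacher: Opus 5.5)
Your plan has the same skeleton as the paper's sketch: induction on the run, with one case per rule. You have also found exactly the step that breaks. But the proposal leaves that step open, and it cannot be closed as stated, because both your sticky invariant and the displayed equality are false for the model as defined.

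A pair $\tuple{t,n}$ enters $\stickyUpdates$ only when $\genMapCGC'(\cgcMap(t)) < \genMapCGC(\cgcMap(t)) < \infty$. So two kinds of per-term decrease of $\genMap$ are never recorded: decreases from a class value of $\infty$ (fresh terms), and decreases of $\genMap(t)$ while the class value is pinned by another member (the case you flagged). Hence ``$\genMap$ only ever decreases, so $\stickyUpdates$ records all its non-initial values'' does not follow. Your fallback, which is still an equality, fails for the same reason.

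Concretely, fix $t \notin A$ and an instance $\qInst{q}{\instantiatedTerm}$ with $\instantiatedTerm \in A$ and $\weight{q} \le \maxCost$ whose body contains $t$. From the initial state, push an equality between two input constants not occurring in $t$, push $\qInst{q}{\instantiatedTerm}$, pop it, and pop the equality.
\begin{itemize}
\item \trackingSolver ends with $\genMap(s) \le \weight{q}$ for every subterm $s$ of $t$.
\item \cgcSolver restores the snapshot value $\genMapCGC(\{t\}) = \infty$ and replays an empty $\stickyUpdates$.
\end{itemize}
So the invariant fails. Moreover, any $\qInst{q'}{t}$ with $\weight{q'} + \weight{q} \le \maxCost$ is enabled in \trackingSolver but not in \cgcSolver, which also refutes the ``vice versa'' clause.

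Your flagged case gives a second counterexample. Suppose $\genMap(f(a)) = 7$ when $a \simeq b$ is pushed, with $f(b) \in A$. An instance in that scope lowering $\genMap(f(a))$ to $4$ records no entry for $f(a)$, since the merged class stays at $0$. After \PopEq{} \cgcSolver has $7$ for $\{f(a)\}$ where \trackingSolver has $4$.

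The paper's sketch has the same hole: its \PopEq{} case only argues that updates \emph{once recorded} survive the split. Two repairs are available.
\begin{itemize}
\item[(i)] Prove only a one-sided invariant. Require $\genMapCGC(X) \ge \min\{\genMap(c) \mid c \in X\}$ for every class $X$ of $\cgcMap$ (each contained in a congruence class), together with $n \ge \genMap(t)$ for every $\tuple{t,n} \in \stickyUpdates$. All four rules preserve this, since a recorded $n$ equals $\weight{q} + \genCGC{\cgcMap}{\genMapCGC}{\instantiatedTerm}$, which is at least the value \trackingSolver writes into $\genMap(t)$. This yields only that runs of \cgcSolver are runs of \trackingSolver.
\item[(ii)] Change $\Delta$ to record $\tuple{t,v}$ for every $t \in \qInst{q}{\instantiatedTerm}$, with $v$ the common update value. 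Then your invariant $\genMap(t) = \min(\genMap_0(t), \min\{n \mid \tuple{t,n} \in \stickyUpdates\})$ holds exactly, and your \PopEq{} computation goes through verbatim. This requires the replay to use the restored map $\cgcMap'$, as you assume, rather than the pre-pop $\cgcMap$ written in the rule.
\end{itemize}

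Two smaller corrections. First, your invariant 2 must use $\mathit{congruent}_{E(\tau)}$-classes rather than $\cong_{E(\tau)}$-classes: merges fire only on congruent pairs, so distinct constants $a, b$ are never merged under $a \simeq b$. Second, merging is a reflexive--transitive closure, so nothing forces saturation; the equality can hold only along runs whose \PushEq{} steps merge to a fixpoint.
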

\begin{proof}
By induction on the length of the transition sequence. The invariant holds initially, since both maps assign generation $0$ to input terms and $\infty$ to all other terms.

On \PushEq, merging congruence classes updates $\genMapCGC$ to the minimum of the two merged classes' generations. This is the same as $\min\{\genMap(c) \mid \cg{c}{t}\}$ once the classes are merged.

On \PushInst, $\cgcSolver$ updates $\genMapCGC(\cgcMap(t))$ for every subterm $t$ in the instance using the same weight and, by the induction hypothesis, the same generation for $\instantiatedTerm$ as $\trackingSolver$ uses to update $\genMap$.

On \PopEq$_{\cgcSolverID}$, symmetrically to \PushEq$_{\cgcSolverID}$, unmerging congruence classes restores the minimum over the resulting classes. The sticky-update mechanism ensures no generation updates are lost in the process: any update to a term's generation, once recorded as a sticky update, survives the split and is re-applied.

\PopInst$_{\cgcSolverID}$ does not affect either map. So the invariant is trivially preserved.
\end{proof}

\begin{theorem}
$\simulates{\cgcSolver}{\trackingSolver}$.
\label{theorem::cgc-sim-tracking}
\end{theorem}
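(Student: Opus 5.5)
The plan is to reduce the theorem to \lemmaRef{cgc-tracking-eq} by exhibiting a step-for-step correspondence. Assume $\rS{\cgcSolverID}{\sigma^{init}_{\cgcSolverID}}{\sigma_C}$, $\rS{\trackingSolverID}{\sigma^{init}_{\trackingSolverID}}{\sigma_T}$, $\subsumes{\stack{\sigma_C}}{\stack{\sigma_T}}$, and a step $\tTr{\sigma_T}{\sigma_T'}$. We must find $\sigma_C'$ with $\rS{\cgcSolverID}{\sigma_C}{\sigma_C'}$ and $\subsumes{\stack{\sigma_C'}}{\stack{\sigma_T'}}$.

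The difficulty is that the hypothesis relates $\sigma_C$ to $\sigma_T$ only through trail subsumption, not through the generation maps. So I would not simulate $\sigma_T$'s step from $\sigma_C$ directly. Instead, by \lemmaRef{cgc-tracking-eq}, the transition sequence $\rho$ witnessing $\rS{\trackingSolverID}{\sigma^{init}_{\trackingSolverID}}{\sigma_T}$, extended by the step to $\sigma_T'$, is also executable in \cgcSolver from its initial state. This yields states $\hat\sigma_C$ and $\hat\sigma_C'$ whose trails carry the same equalities and instances as $\stack{\sigma_T}$ and $\stack{\sigma_T'}$, up to the snapshot annotations on $\mathtt{eq}$ events. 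The lemma's invariant $\genMapCGC(\cgcMap(t)) = \min\{\genMap(c) \mid \cg{c}{t}\}$ then gives $\genCGC{\cgcMap}{\genMapCGC}{\instantiatedTerm} = \genTracked{E}{\genMap}{\instantiatedTerm}$, since both are a max over subterms of the same per-subterm quantity. Hence an enabled $\PushInst_{\trackingSolverID}$ corresponds to an enabled $\PushInst_{\cgcSolverID}$.

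It remains to get from $\sigma_C$ to $\hat\sigma_C'$. I would first backtrack $\sigma_C$ completely using \PopEq$_{\cgcSolverID}$ and \PopInst$_{\cgcSolverID}$, which are always enabled on a nonempty trail. This reaches a state with empty trail, $\cgcMap$ equal to the initial congruence map (restored via the stored snapshots), and $\genMapCGC$ equal to the initial map with the accumulated sticky updates $\stickyUpdates$ applied. I would then replay $\rho$ followed by the last step. Equality and pop steps replay unconditionally. For \PushInst, the only concern is enabledness, which is downward monotone in $\genMapCGC$: sticky updates only lower generations, and the update and merge operations are monotone in their input map. By induction along $\rho$, the replayed $\genMapCGC$ is therefore pointwise at most the one in the fresh run $\hat\sigma_C$. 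Every instance enabled in the fresh run is then enabled in the replay. The trail at the end equals $\stack{\hat\sigma_C'}$, which subsumes $\stack{\sigma_T'}$.

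The main obstacle is this monotonicity induction, specifically the claim that state carried over from the abandoned branch (the sticky set $\stickyUpdates$, and generations restored by \PopEq$_{\cgcSolverID}$) never raises a generation relative to the fresh run. \PopEq$_{\cgcSolverID}$ restores a snapshot $\genMapCGC'$ and then applies updates that only take minima, so the restored map is at most the snapshot. I would prove an ordering lemma: for corresponding states of the two runs with equal $\cgcMap$, if $\genMapCGC_1 \le \genMapCGC_2$ pointwise then this ordering is preserved by each of the four transitions. The subtle cases are merge, where minima preserve order, and \PopEq, where I must check that the snapshot stored during replay is itself at most the fresh-run snapshot. That check follows because snapshots are taken from the inductively ordered maps.
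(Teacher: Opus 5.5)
Your route is the paper's: replay the \trackingSolver trace inside \cgcSolver, starting from a state whose generations are no higher than the initial ones. You rightly single out the monotonicity induction as the crux, but it is false at \PopEq. Knowing that each run's post-pop map lies below its own snapshot, and that the two snapshots are ordered, does not order the post-pop maps. Each run re-applies its \emph{own} $\stickyUpdates$, and $\stickyUpdates$ is not monotone in $\genMapCGC$. The set $\newStickyUpdates{\qInst{q}{\instantiatedTerm}}{\cgcMap}{\genMapCGC}{\genMapCGC'}$ records $t$ only when the finite value of its \emph{class} strictly drops, so a run whose classes are already lower records fewer sticky updates.

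Here is a concrete counterexample. Let $\maxCost=10$, and let $A$ be the subterm closure of $\{m(b),h(a),k(a)\}$, so $f(a),f(b)\notin A$. Let $q_0,q_1,q_2$ have patterns $m(x),h(x),k(x)$, bodies containing $f(x)$, and weights $5,10,7$. Let $q_3$ have pattern $f(x)$ and weight $2$.
\begin{itemize}
\item Reach $\sigma_C$ by pushing and popping $\qInst{q_0}{m(b)}$. Its trail and $\stickyUpdates$ are empty, yet $\genMapCGC(\{f(b)\})=5$.
\item Let $\rho$ push $\qInst{q_1}{h(a)}$, $a\simeq b$, $\qInst{q_2}{k(a)}$, and then pop all three. \trackingSolver ends with an empty trail and $\genMap(f(a))=7$, so $\qInst{q_3}{f(a)}$ is enabled.
\item In the fresh \cgcSolver run, the merged class $\{f(a),f(b)\}$ drops from $10$ to $7$. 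The update $\tuple{f(a),7}$ is recorded, and \PopEq re-applies it.
\item Replayed from $\sigma_C$, the merged class starts at $\min(10,5)=5$, and the instance leaves it at $5$. Nothing is recorded, so \PopEq restores $\genMapCGC(\{f(a)\})=10$. Since $2+10>\maxCost$, the final step is blocked.
\end{itemize}
Your ordering invariant therefore fails ($10$ versus $7$), and the replay you describe does not simulate this step.

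The paper's one-line justification relies on exactly this monotonicity, so it does not fill the gap. Worse, prefix $\rho$ with the push and pop of $\qInst{q_0}{m(b)}$. This gives a run from the initial states along which the invariant of \lemmaRef{cgc-tracking-eq} itself breaks after \PopEq: $10$ in \cgcSolver versus $7$ in \trackingSolver. The cause is that \trackingSolver lowers $\genMap(f(a))$ per term, while \cgcSolver sees no class-level decrease. So the lemma your first step cites cannot be used as stated either.

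A route that avoids both problems is never to pop in the simulating run:
\begin{itemize}
\item keep $\stack{\sigma_C}$;
\item push the equalities $E'$ supplied by \lemmaRef{trackedgen-lowerbound}, merging fully;
\item rebuild a derivation of a low enough generation for $\instantiatedTerm$ by instantiations within the current scope, in the manner of \lemmaRef{naive-computes-ideal} and \lemmaRef{tr-finds-generations};
\item then push $\qInst{q}{\instantiatedTerm}$.
\end{itemize}
Within a single scope, merges and updates only lower $\genMapCGC(\cgcMap(\cdot))$, so the monotonicity you need does hold there.
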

\begin{proof}
Given $\sigma_{\trackingSolver}$ and $\sigma_{\cgcSolver}$, the latter can simulate a step to $\sigma_{\trackingSolver}'$ by first replaying the transition sequence $\rS{\trackingSolver}{\sigma^{init}_{\trackingSolver}}{\sigma_{\trackingSolver}}$ and then taking the same final step. The sequence can be replayed due to \lemmaRef{cgc-tracking-eq}, and the fact that the value of $\genMapCGC(\cgcMap(t))$ at  $\sigma_{\cgcSolver}$ is no greater than its value at $\sigma^{init}_{\cgcSolver}$ for all $t$.
\end{proof}

\begin{theorem}
$\simulates{\trackingSolver}{\cgcSolver}$.
\end{theorem}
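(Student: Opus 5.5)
We need to show that \trackingSolver simulates \cgcSolver: for any reachable $\sigma_{\cgcSolverID}$ and reachable $\sigma_{\trackingSolverID}$ with $\subsumes{\stack{\sigma_{\trackingSolverID}}}{\stack{\sigma_{\cgcSolverID}}}$, every step $\tCGC{\sigma_{\cgcSolverID}}{\sigma_{\cgcSolverID}'}$ can be matched by a sequence of \trackingSolver steps reaching a state whose trail subsumes $\stack{\sigma_{\cgcSolverID}'}$. The plan mirrors the proof of Theorem~\ref{theorem::cgc-sim-tracking}, with the roles of the two solvers swapped. The main tool is \lemmaRef{cgc-tracking-eq}. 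It guarantees two things: the \cgcSolver transition sequence $\rS{\cgcSolverID}{\sigma^{init}_{\cgcSolverID}}{\sigma_{\cgcSolverID}}$ is executable in \trackingSolver from its initial state, and the resulting states satisfy $\genMapCGC(\cgcMap(t)) = \min\{\genMap(c) \mid \cg{c}{t}\}$.

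I would proceed by cases on the \cgcSolver step. \PopEq$_{\cgcSolverID}$ and \PopInst$_{\cgcSolverID}$ only shrink the trail, so the empty \trackingSolver sequence suffices. \PushEq$_{\cgcSolverID}$ pushes some $e$, and \PushEq$_{\trackingSolverID}$ is always enabled, so pushing the same $e$ preserves subsumption. The only real case is \PushInst$_{\cgcSolverID}$ with $\qInst{q}{\instantiatedTerm} \in \enabledCGC{\cgcMap}{\genMapCGC}$. The current \trackingSolver state $\sigma_{\trackingSolverID}$ may have an unrelated history, so its map $\genMap$ need not yet satisfy $\weight{q} + \genTracked{E}{\genMap}{\instantiatedTerm} \le \maxCost$. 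Instead of relying on its current map, I would drive \trackingSolver into a state where the invariant of \lemmaRef{cgc-tracking-eq} holds relative to $\sigma_{\cgcSolverID}$.

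To do this, from $\sigma_{\trackingSolverID}$ I would pop every event on its trail, reaching a state with empty trail whose map $\genMap_0$ satisfies $\genMap_0 \le \genMap^{init}$ pointwise. Then I would replay the \cgcSolver history $\rS{\cgcSolverID}{\sigma^{init}_{\cgcSolverID}}{\sigma_{\cgcSolverID}}$ step by step in \trackingSolver, obtaining a state $\sigma^\ast_{\trackingSolverID}$ whose trail has the same $E$ and $I$ as $\sigma_{\cgcSolverID}$. The key monotonicity claim is that replaying from a pointwise-smaller starting map keeps the map pointwise no larger than in the replay from $\genMap^{init}$. This holds because \textsf{update} takes minima, $\mathrm{gen}^{\trackingSolverID}$ is monotone in $\genMap$, and \trackingSolver never raises $\genMap$ on pops, since updates are sticky. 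Monotonicity in turn shows that every \PushInst in the replay stays enabled. With \lemmaRef{cgc-tracking-eq} this gives
\[
\genTracked{E}{\genMap^\ast}{\instantiatedTerm} \le \max\{\genMapCGC(\cgcMap(s)) \mid s \in \instantiatedTerm\} = \genCGC{\cgcMap}{\genMapCGC}{\instantiatedTerm}.
\]
Hence $\qInst{q}{\instantiatedTerm} \in \enabledTr{E}{\genMap^\ast}$, and one more \PushInst$_{\trackingSolverID}$ step reaches a state whose trail subsumes $\stack{\sigma_{\cgcSolverID}'}$.

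The main obstacle is making the monotonicity step rigorous. \lemmaRef{cgc-tracking-eq} is stated only for replays starting at the initial states, so I would need a strengthened version. In it, the \trackingSolver map is only required to be pointwise $\le$ its canonical counterpart, and the equality $\genMapCGC(\cgcMap(t)) = \min\{\genMap(c)\mid\cg{c}{t}\}$ is weakened to $\ge$. I would prove this strengthened version by the same induction on the transition sequence. The delicate case is \PopEq$_{\cgcSolverID}$, where the sticky-update replay must be checked not to give \cgcSolver smaller values than \trackingSolver's accumulated map. I would argue this by noting that every tuple in $\stickyUpdates$ records a value that was at some point assigned by an \textsf{update} in the replayed history, so the same value already lies in \trackingSolver's map for that term.
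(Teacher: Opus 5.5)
Your proposal is correct and follows the paper's argument, with more detail: the paper proves this by symmetry with Theorem~\ref{theorem::cgc-sim-tracking}, having \trackingSolver replay the \cgcSolver history and then take the same final step, justified by \lemmaRef{cgc-tracking-eq} and the fact that the map of \trackingSolver never exceeds its initial value, which is exactly your pop-then-replay construction with its monotonicity comparison. The strengthened lemma in your last paragraph is optional, since comparing the run from $\genMap_0$ with the canonical run from $\genMap^{init}$ happens entirely inside \trackingSolver and composes with \lemmaRef{cgc-tracking-eq} as stated; still, its one-sided inequality is the only direction of that lemma this theorem actually uses.
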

\begin{proof}
The proof is symmetric to that of Theorem \ref{theorem::cgc-sim-tracking}.
\end{proof}

\section{Z3 Implementation}
\label{s:implementation}
We implemented \cgcSolver in Z3; our implementation has been upstream since version 5.0.
This section briefly describes some of the implementation details.

Z3 associates each internalized term with a metadata structure called an e-node.
It keeps track of congruence classes through a hash table which stores a unique e-node per congruence class, called the \textit{congruence representative}.
Z3 uses this table during pattern matching to ensure it only binds congruence representatives, in order to avoid duplicate matches.
This table effectively stores $\cgcMap$ from \cgcSolver.

One option, in line with \trackingSolverID, would be to store the function $g$ as an e-node field set at internalization and updated during quantifier instantiation. At matching time, we could walk the congruence class of each subterm of the match candidate to determine the minimum generation. We experimented with this option but discarded it, as it led to significant performance regression (see \secref{eval}).

Another option, in line with \cgcSolver, would be to forego $\genMap$ and instead store $\genMapCGC$, directly in the hash table entry. This performs much better than the prior option, but still incurs a minor (5\%) slowdown, since looking up the hash table is less cache-friendly than reading the generation number directly off the e-node.

Our implementation instead stores $\genMapCGC$ on the e-node of the congruence root.
When congruence classes get merged, we update the new root's e-node and save the old generation on the backtracking stack.
This wastes some space, since the generation field goes unused on every e-node that is not a congruence root, but the tradeoff is worthwhile: in the common case the congruence root's e-node is already accessed before a generation lookup, so we avoid the extra access.

We store $\delta$ explicitly as a list of sticky updates to be re-applied whenever congruence classes get unmerged.
This can be expensive: sticky updates survive backtracking until the term whose generation was updated is garbage collected, and re-applying all of them costs time linear in the number of updates.
In practice, however, discovering an existing term at a lower generation is uncommon enough (despite its significance for instability) that this cost is negligible.

Propagating generation updates is another potentially expensive consequence of our implementation. When a term whose generation previously dominated other terms' generations (via instantiation) is itself updated, that update must propagate, e.g., by allowing duplicate matches with lower generation numbers. Our implementation skips this, slightly deviating from \cgcSolver, yet still greatly improves stability. We suspect enabling such duplicate matches would not significantly affect performance for the same reason as above: generation updates are rare in practice.

\section{Empirical Evaluation}
\label{s:eval}
\textbf{Experimental setup:}
For our evaluation, we use the Mariposa~\cite{mariposa} framework, and its unstable-core benchmark set.
This consists of 60 queries from prior program verification projects, mostly written in Dafny~\cite{dafny}.
Mariposa stability-fuzzes these queries by generating mutants through reseeding, assertion shuffling and variable renaming.

For each query, we generate 99 mutants and run each of them, along with the original query, with a 60-second timeout (Mariposa's default). We include queries that are not outcome-unstable to ensure we do not introduce regressions that increase timeouts.
Our experiments were run on a machine with two 64-core AMD EPYC 7763 CPUs at 2.4GHz and 503GiB of RAM. For two of the queries, running all 100 variants in parallel allocates enough memory to crash our infrastructure; three others produce mutants that return a solver error due to malformed input. We discard these five queries and present results for the remaining 55 queries, producing 5500 mutants, each of which returns \unsat, \unknown, or \timeout.

\newcommand{\implT}{\cc{z3-\trackingSolver}}
\newcommand{\implCHashtable}{\cc{z3-\cgcSolver-table}}
\newcommand{\implCEnodes}{\cc{z3-\cgcSolver-enode}}

We compare four versions of generation accounting: z3's accounting prior to our work, which we treat as a baseline, the two less efficient options outlined in \secref{implementation} (\implT, \implCHashtable), and our final implementation (\implCEnodes). 

\begin{table}[!h]
\centering
\setlength{\tabcolsep}{6pt}
\begin{tabular}{lccc}
\toprule
\textbf{Solver version} & \makecell{\textbf{Outcome-unstable}\\\textbf{queries}} & \makecell{\textbf{Unknown}\\\textbf{mutants}} & \makecell{\textbf{Timeouts}} \\
\midrule
\cc{baseline}    & 27 & 489 & 1813 \\
\implT           & 6  & 25  & 2399 \\
\implCHashtable  & 4  & 25  & 1867 \\
\implCEnodes     & 4  & 29  & 1779 \\
\bottomrule
\end{tabular}

\vspace{8pt}

\caption{Stability and performance results on Mariposa's unsat core benchmark across implementation versions. Experiments include 55 queries generating 5500 mutants.}
\label{tab:mariposa-eval}
\end{table}

Table~\ref{tab:mariposa-eval} reports, for each version, the number of outcome-unstable queries (i.e., queries that have at least one \unknown and one \unsat mutant), the total number of \unknown mutants across all queries, and the total number of timeouts.

\textbf{Stability}:
All three implementations substantially reduce outcome instability relative to the baseline: \implCEnodes cuts the number of unknown mutants by 94\% and the number of outcome-unstable queries by 85\%, with \implT and \implCHashtable{} achieving comparable reductions; the differences are within noise.

\textbf{Performance}:
We also observe a performance cost, measured as the total number of timeouts across all queries: \implT is considerably slower, producing 32\% more timeouts than the baseline, while \implCHashtable incurs only a minor regression (3\% more timeouts). \implCEnodes, our final implementation, avoids this cost and yields slightly fewer timeouts than the baseline.
Additionally, for each solver version, we manually examined the timeout count per query and confirmed that our changes did not cause any single query to blow up.

To ensure our changes do not introduce regressions beyond program verification, we also ran \implCHashtable{} and \implCEnodes{} on the non-incremental SMT-LIB benchmarks~\cite{smtlib2025}, using a 20-second timeout. For quantified categories, all differences fall within noise, the largest being 14 additional timeouts (1\%) for \implCHashtable{} on ALIA. Among quantifier-free categories, the only notable difference is a 5\% slowdown for \implCHashtable{} on QF\_UF, which does not cause additional timeouts. This is expected: the same cache-locality issue affecting Mariposa performance also slows down congruence closure in the absence of quantifiers. \implCEnodes{} avoids this regression entirely.

\bibliographystyle{splncs04}
\bibliography{biblio}

\appendix

\section{Proofs related to \naiveSolver}
\label{appendix:proof:n-sim-id}

\newcommand{\tSInst}[3]{\transitionSup{#2}{#1}{inst}{#3}}
\newcommand{\tIDInst}[2]{\tSInst{\idealSolverID}{#1}{#2}}
\newcommand{\tNInst}[2]{\tSInst{\naiveSolverID}{#1}{#2}}
\newcommand{\tTInst}[2]{\tSInst{\trackingSolverID}{#1}{#2}}

\newcommand{\rNInst}[2]{\reachableSup{#1}{\naiveSolverID}{inst}{#2}}
\newcommand{\rTInst}[2]{\reachableSup{#1}{\trackingSolverID}{inst}{#2}}

\begin{lemma}[\naiveSolver can find all ideal generations within threshold]
\label{lemma:naive-computes-ideal}
\[
    \genideal{E}{t} \leq \maxCost
    \implies \\
    \exists \tau'. 
        \rNInst{\state{\tau}}{\state{\tau'}} 
        \land \subsumes{\tau'}{\tau}
        \land \gen{E}{I(\tau')}{t} \leq \genideal{E}{t}
\]
where $\rNInst{}{}$ denotes reachability through exclusively $\PushInst_{\naiveSolverID}$ transitions.
\end{lemma}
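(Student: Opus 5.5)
We argue by strong induction on the height of a minimal derivation witnessing $\genideal{E}{t}$. Fix $E = E(\tau)$; note that $\PushInst_{\naiveSolverID}$ steps leave $E$ unchanged and only grow $I$, so subsumption $\subsumes{\tau'}{\tau}$ holds automatically for any sequence of such steps. Moreover, $\gen{E}{I}{\cdot}$ is antitone in $I$: every derivation $\derivableEI{t}{g}$ remains valid for $I' \supseteq I$. Hence once some intermediate trail achieves a bound for a term, every later trail in the sequence preserves that bound. This monotonicity is what lets us compose the sub-sequences produced for different premises.

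Take a derivation $D$ of $\derivable{E}{\instUniv}{t}{g}$ with $g = \genideal{E}{t} \leq \maxCost$. Every judgment $\derivable{E}{\instUniv}{s}{h}$ appearing in $D$ satisfies $h \leq g$: the Inst rule only adds the nonnegative weight $\weight{q}$, and Congruence takes a $\max$. Hence all premises are within threshold, though they need not be minimal for their own terms. We therefore state the induction over derivations rather than over $\genideal{}{}$ values: for every derivation $D$ of $\derivable{E}{\instUniv}{s}{h}$ with $h \le \maxCost$, there is a $\PushInst_{\naiveSolverID}$-sequence from $\tau$ to some $\tau'$ with $\derivable{E}{I(\tau')}{s}{h}$. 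The lemma then follows by applying this claim to $D$.

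The cases follow the last rule of $D$. For Input, no steps are needed. For Congruence, with premises $t[s_1]$ at $g$ and $s_2$ at $g'$, we apply the induction hypothesis to the first premise to reach $\tau_1$, then to the second premise starting from $\tau_1$ to reach $\tau_2$. By monotonicity both judgments hold with $I(\tau_2)$, and the Congruence rule (unchanged, since $E$ is fixed) yields $t[s_2]$ at $\max(g,g')$. For Inst, with premise $\instantiatedTerm$ at $h$ and $g = \weight{q}+h \le \maxCost$, we apply the induction hypothesis to get $\tau_1$ with $\derivable{E}{I(\tau_1)}{\instantiatedTerm}{h}$. Then $\gen{E}{I(\tau_1)}{\instantiatedTerm} \le h$, so $\weight{q} + \gen{E}{I(\tau_1)}{\instantiatedTerm} \le \maxCost$, placing $\qInst{q}{\instantiatedTerm} \in \enabled{E}{I(\tau_1)}$. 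One $\PushInst_{\naiveSolverID}$ step adds it to $I$, and the Inst rule now applies with $I(\tau_1)\cup\{\qInst{q}{\instantiatedTerm}\}$.

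The main obstacle is this Inst case. It must be verified that the threshold check in $\PushInst_{\naiveSolverID}$ uses exactly the quantity bounded by the induction hypothesis. It must also be verified that the premise's derivation stays within threshold, which relies on nonnegative weights ensuring premise generations never exceed the conclusion's. A smaller subtlety is that derivations may be infinite or cyclic in principle; we avoid this by taking a well-founded (finite) derivation, which exists because the $\min$ in $\genideal{}{}$ is attained by some finite derivation tree.
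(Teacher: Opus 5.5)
Your proof is correct and follows the paper's argument: induction on the derivation witnessing $\genideal{E}{t}$, where Input needs no steps, Congruence composes the two sub-sequences using that adding instances only lowers $\mathrm{gen}_{E,I}(\cdot)$, and Inst pushes the single newly enabled instance $\qInst{q}{\instantiatedTerm}$. You make two things explicit that the paper leaves implicit: the induction claim covers arbitrary derivations with value at most $\maxCost$, not only minimal ones, and it is quantified over every starting trail with the same $E$, which lets you chain the second induction hypothesis from $\tau_1$ in place of the paper's step of replaying the transitions from $\tau_1$.
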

\begin{proof}
Since $\genideal{E}{\instantiatedTerm} \leq G < \infty$,  we have an inference for $\derivable{E}{\instUniv}{t}{\genideal{E}{t}}$. By induction on this inference, we have three cases:
\begin{itemize}
    \item (\Input) $t \in A$: 
     Then $\gen{E}{I}{t} = 0 = \genideal{E}{t}$. So $\tau' \defineEq \tau$ is a witness.
    \item (\Congruence) $t=t'[s_2], s_1 \cong_{E} s_2, \derivable{E}{\instUniv}{t'[s_1]}{n_{t'}}, \derivable{E}{\instUniv}{s_2}{n_{s_2}}, \genideal{E}{t}=\max(n_{t'}, n_{s_2})$:
    We have $n_{t'} \leq \maxCost$ and $n_{s_2} \leq \maxCost$ and so, by the induction hypotheses, we get
    $\rNInst{\state{\tau}}{\state{\tau_1}} \land \gen{E}{I(\tau_1)}{t[s_1]} \leq n_{t'}$ and $\rNInst{\state{\tau}}{\state{\tau_2}} \land \gen{E}{I(\tau_2)}{s_2} \leq n_{s_2}$.
    Since adding instances never blocks $\PushInst_{\naiveSolverID}$, we can replay the transitions in $\rNInst{\state{\tau}}{\state{\tau_2}}$ from  $\state{\tau_1}$ and get a $\tau'$ with $\rNInst{\state{\tau}}{\rNInst{\state{\tau_1}}{\state{\tau'}}}$, $\subsumes{\tau'}{\tau_1}$, and $\subsumes{\tau'}{\tau_2}$.
    Then $\gen{E}{I(\tau')}{t[s_1]} \leq \gen{E}{I(\tau_1)}{t[s_1]} \leq n_{t'}$ and similarly $\gen{E}{I(\tau')}{s_2} \leq n_{s_2}$.
    So, through the \Congruence rule, $\gen{E}{I(\tau')}{t} \leq n = \genideal{E}{t}$.
    \item (\Inst) $t \in \qInst{q}{\instantiatedTerm}, \genideal{E}{t} = \weight{q} + n_{\instantiatedTerm}, \derivable{E}{\instUniv}{\instantiatedTerm}{n_{\instantiatedTerm}}$:
    Since quantifier weights are non-negative, $n_{\instantiatedTerm} \leq n \leq \maxCost$.
    Then by the induction hypothesis we have a $\tau''$ with $\gen{E}{I(\tau'')} {\instantiatedTerm} \leq \genideal{E}{\instantiatedTerm} \land \rNInst{\state{\tau}}{\state{\tau'}} 
    \land \subsumes{\tau'}{\tau}$. 
    $\PushInst_{\naiveSolverID}$ with $\qInst{q}{\instantiatedTerm}$ is enabled in $\state{\tau''}$ and leads to some $\state{\tau'}$ such that $\tau'$ is a witness.
\end{itemize}
\end{proof}
    
\begin{lemma} [The \PushInst case of Theorem \ref{theorem:n-sim-id}]
For all $\sigma_{\naiveSolverID}, \sigma_{\idealSolverID}, \sigma_{\idealSolverID}'$,
\[
    \subsumes{\stack{\sigma_{\naiveSolverID}}}{\stack{\sigma_{\idealSolverID}}} \land \tIDInst{\sigma_{\idealSolverID}}{\sigma_{\idealSolverID}'} \implies \\
    \exists \sigma_{\naiveSolverID}'. \; \rS{\naiveSolverID}{\sigma_{\naiveSolverID}}{\sigma_{\naiveSolverID}'} \land \subsumes{\stack{\sigma_{\naiveSolverID}'}}{\stack{\sigma_{\idealSolverID}'}}
\]
\end{lemma}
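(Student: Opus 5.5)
We use \lemmaRef{naive-computes-ideal} to build up, inside \naiveSolver, the instances that witness the ideal generation of the matched term, and then take one last $\PushInst_{\naiveSolverID}$ step. Let the \PushInst step of \idealSolver push $\qInst{q}{\instantiatedTerm}$, so that $\stack{\sigma_{\idealSolverID}'} = \stack{\sigma_{\idealSolverID}} \cdot \instEvent{\qInst{q}{\instantiatedTerm}}$. Its premise gives $\weight{q} + \genideal{E_{\idealSolverID}}{\instantiatedTerm} \leq \maxCost$, where $E_{\idealSolverID} = E(\stack{\sigma_{\idealSolverID}})$. Write $\tau = \stack{\sigma_{\naiveSolverID}}$ and $E_N = E(\tau)$. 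By hypothesis $E_{\idealSolverID} \subseteq E_N$ and $I(\stack{\sigma_{\idealSolverID}}) \subseteq I(\tau)$.

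The first step is monotonicity of generations in the equality set. Every inference of $\derivable{E}{I}{t}{g}$ remains valid when $E$ is replaced by a superset, because the only premise mentioning $E$ is $s_1 \cong_E s_2$ in the \Congruence rule, and equivalence is monotone in $E$. Likewise every inference remains valid when $I$ is enlarged. Hence $\genideal{E_N}{\instantiatedTerm} \leq \genideal{E_{\idealSolverID}}{\instantiatedTerm}$, and since weights are non-negative,
\[
\genideal{E_N}{\instantiatedTerm} \leq \maxCost - \weight{q} \leq \maxCost .
\]

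Next we apply \lemmaRef{naive-computes-ideal} with $E \defineEq E_N$, starting from $\state{\tau}$ and with $t \defineEq \instantiatedTerm$. It yields $\tau'$ with $\rNInst{\state{\tau}}{\state{\tau'}}$, $\subsumes{\tau'}{\tau}$ and $\gen{E_N}{I(\tau')}{\instantiatedTerm} \leq \genideal{E_N}{\instantiatedTerm}$. These transitions are $\PushInst_{\naiveSolverID}$ steps only, so $E(\tau') = E_N$. Combining with the bound above gives $\weight{q} + \gen{E(\tau')}{I(\tau')}{\instantiatedTerm} \leq \maxCost$, that is, $\qInst{q}{\instantiatedTerm} \in \enabled{E(\tau')}{I(\tau')}$.

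We then take one $\PushInst_{\naiveSolverID}$ step from $\state{\tau'}$ to $\sigma_{\naiveSolverID}' \defineEq \state{\tau' \cdot \instEvent{\qInst{q}{\instantiatedTerm}}}$. For the subsumption check, $E(\stack{\sigma_{\naiveSolverID}'}) = E_N \supseteq E_{\idealSolverID} = E(\stack{\sigma_{\idealSolverID}'})$. For instances, $I(\stack{\sigma_{\naiveSolverID}'}) = I(\tau') \cup \{\qInst{q}{\instantiatedTerm}\}$. This contains $I(\stack{\sigma_{\idealSolverID}}) \cup \{\qInst{q}{\instantiatedTerm}\} = I(\stack{\sigma_{\idealSolverID}'})$, because $I(\tau') \supseteq I(\tau) \supseteq I(\stack{\sigma_{\idealSolverID}})$.

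The main obstacle is the mismatch in equality sets. \lemmaRef{naive-computes-ideal} is stated for a single $E$, but \idealSolver's enabling condition uses $E_{\idealSolverID}$ while \naiveSolver works under the possibly larger $E_N$. The monotonicity step resolves this, but the induction on derivations needs to be checked: each \Congruence premise under $E_{\idealSolverID}$ must transfer to $E_N$. A second point needing care is that the lemma only concerns $\PushInst_{\naiveSolverID}$ steps, so the trail is only extended and the equality set is never changed.
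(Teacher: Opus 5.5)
Your proof is correct and follows the paper's route: invoke \lemmaRef{naive-computes-ideal} from $\stack{\sigma_{\naiveSolverID}}$ to bring the generation of $\instantiatedTerm$ down, then take one final $\PushInst_{\naiveSolverID}$ step and check subsumption. Your explicit monotonicity step is more careful than the paper, which simply treats $E(\stack{\sigma_{\idealSolverID}})$ and $E(\stack{\sigma_{\naiveSolverID}})$ as equal even though subsumption only gives $E(\stack{\sigma_{\idealSolverID}}) \subseteq E(\stack{\sigma_{\naiveSolverID}})$; you also correctly apply the lemma with the naive solver's own equality set, which is what its proof implicitly requires.
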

\begin{proof}
Let $E$ be $E(\stack{\sigma_{\idealSolverID}})$ (or equivalently $E(\stack{\sigma_{\naiveSolverID}})$, since $\subsumes{\stack{\sigma_{\naiveSolverID}}}{\stack{\sigma_{\idealSolverID}}}$).
From $\tIDInst{\sigma_{\idealSolverID}}{\sigma_{\idealSolverID}'}$ we have $\genideal{E}{\instantiatedTerm} + \weight{q} \leq \maxCost$, where $\qInst{q}{\instantiatedTerm}$ is the instance being pushed.
By Lemma \ref{lemma:naive-computes-ideal}, we have a $\tau'$ with $\gen{E}{I(\tau')} {\instantiatedTerm} \leq \genideal{E}{\instantiatedTerm} \land \rNInst{\sigma_{\naiveSolverID}}{\state{\tau'}} \land \subsumes{\tau'}{\stack{\sigma_{\naiveSolverID}}}$
$\PushInst_{\naiveSolverID}$ with $\qInst{q}{\instantiatedTerm}$ is enabled in $\state{\tau'}$ and leads to some $\sigma_{\naiveSolverID}'$ with $\subsumes{\stack{\sigma_{\naiveSolverID}'}}{\stack{\sigma_{\idealSolverID}'}}$.
\end{proof}

\section{Proofs related to $\trackingSolver$}

\begin{lemma}[\trackingSolver can find derivable generations]
\label{lemma:tr-finds-generations}
For all $t, n, \tau, \genMap$ such that $\rS{\trackingSolverID}{\sigma^{init}_{\trackingSolverID}}{\state{\tau, \genMap}}$,  
\[
\derivable{E}{I}{t}{n}
\implies
\exists \tau', \genMap'. \; 
    \rTInst{\state{\tau, \genMap}}{\state{\tau', \genMap'}} 
    \land
    \genTracked{E}{\genMap'}{t} \leq n
\]
\end{lemma}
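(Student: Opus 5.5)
We argue by induction on the derivation of $\derivableEI{t}{n}$, mirroring \lemmaRef{naive-computes-ideal}, with $I = I(\tau)$ and $E = E(\tau)$. Since \PushInst$_{\trackingSolverID}$ steps add only instance events, $E$ is unchanged along $\rTInst{}{}$. We also need one monotonicity fact, which we establish first. Updates to $\genMap$ only ever take a $\min$ with the old value, so $\genMap$ is pointwise non-increasing along any transition sequence. Hence $\genTracked{E}{\genMap}{t}$, which is a $\max$ of $\min$s of $\genMap$, is non-increasing too, and \emph{enabledness is preserved}: if an instance is in $\enabledTr{E}{\genMap}$, it stays enabled after further \PushInst$_{\trackingSolverID}$ steps. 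As a consequence, any sequence of \PushInst$_{\trackingSolverID}$ steps executable from a state can be replayed from any state that is reachable from it by \PushInst$_{\trackingSolverID}$ steps, and the final $\genMap$ after replaying is pointwise no larger than either run's result. This is the analogue of ``adding instances never blocks $\PushInst_{\naiveSolverID}$''.

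\emph{Input case}: $t \in A$, so $\genMap(t)=0$ initially, and by monotonicity still $0$. Since $A$ is subterm-closed, every $s \in t$ has $\genMap(s)=0$, so $\genTracked{E}{\genMap}{t}=0 \le n$ with $\tau'=\tau$.

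\emph{Inst case}: $t \in \qInst{q}{\instantiatedTerm} \in I$ and $n = \weight{q}+n_\theta$ with $\derivableEI{\instantiatedTerm}{n_\theta}$. By the induction hypothesis we reach $\state{\tau'',\genMap''}$ with $\genTracked{E}{\genMap''}{\instantiatedTerm} \le n_\theta$. Now push $\qInst{q}{\instantiatedTerm}$ again. We must check it is enabled, i.e.\ $\weight{q}+n_\theta \le \maxCost$. This does \emph{not} follow from the hypotheses as stated. Instead we would use the reachability assumption and strengthen the induction with the invariant that every $\qInst{q}{\instantiatedTerm}\in I(\tau)$ was enabled when pushed. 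That gives $\weight{q}+\genTracked{E_0}{\genMap_0}{\instantiatedTerm}\le\maxCost$ at push time, and by monotonicity of $\genMap$ the instance remains enabled; duplicates are harmless in the model. The update sets $\genMap'(s)\le \weight{q}+\genTracked{E}{\genMap''}{\instantiatedTerm}\le n$ for every $s\in\qInst{q}{\instantiatedTerm}$, including all subterms of $t$. Hence $\genTracked{E}{\genMap'}{t}\le n$.

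\emph{Congruence case}: $t = u[s_2]$ with $s_1\cong_E s_2$, $\derivableEI{u[s_1]}{g}$ and $\derivableEI{s_2}{g'}$, and $n=\max(g,g')$. Apply the induction hypothesis to both premises and combine the two runs by the replay argument, so that one state has $\genTracked{}{}{u[s_1]}\le g$ and $\genTracked{}{}{s_2}\le g'$. It remains to bound each subterm $s$ of $u[s_2]$:
\begin{itemize}
\item if $s$ lies inside the $s_2$ position, it is a subterm of $s_2$, so its class minimum is at most $g'$;
\item otherwise $s = C[s_2]$ for a context $C$ along the path to the hole, and $C[s_1]$ is a subterm of $u[s_1]$, so some $c$ congruent to $C[s_1]$ has $\genMap(c)\le g$.
\end{itemize}

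\textbf{Main obstacle.} In the second subcase we need $c \cong_E C[s_2]$ to imply $\cg{c}{C[s_2]}$. This requires that the head symbol agree and that the arguments be equivalent, which holds because $C[s_1]\cong_E C[s_2]$ and both share $C$'s head symbol. Even so, \emph{equivalence} of $c$ to $C[s_2]$ does not obviously yield \emph{congruence}. I would try to resolve this by showing that $\mathit{congruent}_E$ is transitive on terms with the same head, since equivalence of arguments is transitive. If that fails, I would weaken the claim to the class minimum over equivalence. This is the step where the definition of tracked generation must be checked carefully.
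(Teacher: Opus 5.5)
Your proof is correct and follows the paper's own route: induction on the derivation, replaying the two inductive runs in the Congruence case, and bounding each subterm of $u[s_2]$. Your Inst-case argument for enabledness (the instance was enabled when first pushed, and tracked generation only decreases) fills a step the paper merely asserts; add that the equalities in force at that push are still in $E$, because the stack below a surviving event is untouched and tracked generation is antitone in $E$. Your ``main obstacle'' is not one, since $\mathit{congruent}_E$ is transitive straight from its definition (same head symbol, argumentwise $\cong_E$), so $\mathit{congruent}_E(c, C[s_1])$ and $\mathit{congruent}_E(C[s_1], C[s_2])$ give $\mathit{congruent}_E(c, C[s_2])$ exactly as in the paper; and your case split omits subterms of $u[s_2]$ disjoint from the hole position, which are subterms of $u[s_1]$ and hence bounded by the premise's generation $g$.
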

\begin{proof}
By induction on $\derivable{E}{I}{t}{n}$, we have three cases:
\begin{itemize}
    \item (\Input) $t \in A, n=0$:
    Since $A$ is subterm-closed, $t' \in A$ for all subterms $t'$ of $t$. Then we have $\genMap^{init}(t') = 0$ at initialization and because $\genMap(t')$ cannot increase through any transition, $\genMap(t') = 0$. So $\tau'\defineEq\tau, \genMap' \defineEq g$ is a witness.
    \item (\Inst) $t \in \qInst{q}{\instantiatedTerm}, \gen{E}{I}{t} = \weight{q} + n_{\instantiatedTerm}, \derivable{E}{I}{\instantiatedTerm}{n_{\instantiatedTerm}}$:
    By the induction hypothesis we have $\rTInst{\state{\tau, \genMap}}{\state{\tau'', \genMap''}}$ and $\genTracked{E}{\genMap''}{t} \leq n_{\instantiatedTerm}$.
    $\PushInst_{\trackingSolverID}$ is enabled at $\state{\tau'', \genMap''}$ and leads to some $\state{\tau', \genMap'}$ that serves as a witness. 
    \item (\Congruence) $t=t'[s_2], s_1 \cong_{E} s_2, \derivable{E}{I}{t'[s_1]}{n_{t'}}, \derivable{E}{I}{s_2}{n_{s_2}}, \gen{E}{I}{t}=\max(n_{t'}, n_{s_2})$:
    Similarly to the argument in the proof of Lemma \ref{lemma:naive-computes-ideal}, we can take the transitions described by the two inductive hypotheses to get a $\tau', \genMap'$ with $\rTInst{\state{\tau, \genMap}}{\state{\tau', \genMap'}}$, $\genTracked{E}{g'}{t'[s_1]} \leq n_{t'}$, and $\genTracked{E}{g'}{s_2} \leq n_{s_2}$.
    It suffices to show $\min \{\genMap'(c) \mid \cg{c}{t_s}\} \leq \gen{E}{I}{t}$ for all subterms $t_s$ of $t'[s_2]$. We consider three cases:
    \begin{itemize}
        \item $\isSubterm{t_s}{s_2}$: Since $\genTracked{E}{g''}{s_2} \leq n_{s_2}$,  $\min \{\genMap'(c) \mid \cg{c}{t_s}\} \leq  n_{s_2} \leq \gen{E}{I}{t}$.
        \item $\isSubterm{t_s}{t'[s_1]}$: The same argument applies with $\genTracked{E}{g''}{t'[s_1]} \leq n_{t'}$
        \item $t_s = t''[s_2], \isSubterm{t''[s_1]}{t'[s_1]}$: We have $\min \{\genMap'(c) \mid \cg{c}{t''[s_1]}\} \leq n_{t'}$ due to $\genTracked{E}{g''}{t'[s_1]} \leq n_{t'}$. Since $\cg{t''[s_1]}{t''[s_2]}$, the same minimal $c$ ensures $\min \{\genMap'(c) \mid \cg{c}{t''[s_2]}\} \leq n_{t'} \leq \gen{E}{I}{t}$.
    \end{itemize}
    
\end{itemize}
\end{proof}

\begin{lemma}[Mapped values for subterms]
\label{lemma:subterm-genmap}
For all reachable states $\state{\tau, \genMap}$ of \trackingSolver, for all terms $t, s$, $s \in t \implies \genMap(s) \leq \genMap(t)$.
\end{lemma}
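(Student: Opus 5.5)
The plan is to prove the claim by induction on the length of the transition sequence $\rS{\trackingSolverID}{\sigma^{init}_{\trackingSolverID}}{\state{\tau, \genMap}}$. The invariant is
\[
\forall t, s.\; s \in t \implies \genMap(s) \leq \genMap(t).
\]
Two facts drive the argument: $\genMap$ only ever changes through $\PushInst_{\trackingSolverID}$, and it is monotonically non-increasing pointwise.

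\textbf{Base case.} In the initial state, let $s \in t$. If $t \in A$, then $s \in A$ because $A$ is subterm-closed, so $\genMap(s) = 0 = \genMap(t)$. If $t \notin A$, then $\genMap(t) = \infty$ and the inequality holds trivially.

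\textbf{Inductive step.} $\PushEq$, $\PopEq$ and $\PopInst$ leave $\genMap$ unchanged, so the invariant carries over directly. The only interesting case is $\PushInst_{\trackingSolverID}$ with an instance $\qInst{q}{\instantiatedTerm}$. Set
\[
v \defineEq \weight{q} + \genTracked{E}{\genMap}{\instantiatedTerm}.
\]
A crucial observation is that $v$ is computed from the old map $\genMap$. By the definition of $\updateTr{E}{\genMap}{\qInst{q}{\instantiatedTerm}}$, it is therefore a single value, identical for every term updated in this step. The new map is $\genMap'(u) = \min(\genMap(u), v)$ if $u \in \qInst{q}{\instantiatedTerm}$, and $\genMap'(u) = \genMap(u)$ otherwise. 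Fix $s \in t$ and split on whether $t$ occurs in the instance.

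\emph{Case $t \in \qInst{q}{\instantiatedTerm}$.} Since being a subterm of the instantiated body is closed under taking subterms, $s \in \qInst{q}{\instantiatedTerm}$ as well. By the induction hypothesis and monotonicity of $\min(\cdot, v)$,
\[
\genMap'(s) = \min(\genMap(s), v) \leq \min(\genMap(t), v) = \genMap'(t).
\]

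\emph{Case $t \notin \qInst{q}{\instantiatedTerm}$.} Here $\genMap'(t) = \genMap(t)$, and $\genMap'(s) \leq \genMap(s)$ since updates only take minima. Hence $\genMap'(s) \leq \genMap(s) \leq \genMap(t) = \genMap'(t)$.

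The main point needing care is the first case, and it rests on two facts. First, the update is simultaneous with one common value $v$. If the updates were applied sequentially and $v$ were recomputed with partially updated maps, the argument would need more work. Second, the relation $\isSubterm{\cdot}{\qInst{q}{\instantiatedTerm}}$ must be subterm-closed. I would state this explicitly as a property of the instantiated body, since the paper defines $\in$ for instances only informally.
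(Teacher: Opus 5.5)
Your proof is correct and follows the same route as the paper. The initial map satisfies the invariant because $A$ is subterm-closed, only $\PushInst_{\trackingSolverID}$ changes $\genMap$, and it never lowers $\genMap(t)$ below $\genMap(s)$. You spell out the two cases the paper compresses into one sentence, using the simultaneous update with the single value $v$ and subterm-closure of the instance. Your base case is also more careful than the paper's, which claims the two initial values always coincide and so overlooks the case $s \in A$, $t \notin A$.
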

\begin{proof}
    Since $A$ is subterm-closed, initially either $\genMap(t) = \genMap(s) = 0$ or $\genMap(t) = \genMap(s) = \infty$.
    Then, $\genMap(t)$ and $\genMap(s)$ may only decrease, via $\PushInst_{\trackingSolverID}$ transitions, which never decrease $\genMap(t)$ further than $\genMap(s)$.
\end{proof}

\begin{lemma}[Tracked generations are lower-bounded by some ideal]
\label{lemma:trackedgen-lowerbound}
For all reachable states $\state{\tau, g}$ of \trackingSolver,
\[
    \exists E'. \forall t. \; \genideal{E'}{t} \leq  \genTracked{E(\tau)}{g}{t}
\]
\end{lemma}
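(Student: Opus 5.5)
The plan is to choose $E'$ as the union of every equality ever pushed along the execution reaching $\state{\tau, g}$, i.e.\ $E' \defineEq \bigcup_i E(\tau_i)$ over all intermediate states $\state{\tau_i, g_i}$ of some fixed run $\rS{\trackingSolverID}{\sigma^{init}_{\trackingSolverID}}{\state{\tau,g}}$. Since generation updates are sticky, a value $g(t)$ may have been justified under equalities that have since been popped. Taking all of them into $E'$ makes every historical justification available at once. Derivability is monotone in $E$: adding equalities only enlarges $\cong_E$, so any inference of $\derivable{E}{\instUniv}{t}{n}$ remains an inference under a larger set. Also $E(\tau) \subseteq E'$.

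The core invariant, proved by induction on the run, is that for every term $c$ with $g(c) < \infty$ we have $\derivable{E'}{\instUniv}{c}{g(c)}$, and hence $\genideal{E'}{c} \leq g(c)$. In the base case $g(c)=0$ forces $c \in A$, and the Input rule applies. PushEq, PopEq and PopInst do not change $g$; since $E'$ is fixed for the whole run, the invariant transfers unchanged. For a $\PushInst_{\trackingSolverID}$ step with $\qInst{q}{\instantiatedTerm}$, each updated $t \in \qInst{q}{\instantiatedTerm}$ gets value $\weight{q} + \genTracked{E(\tau_i)}{g_i}{\instantiatedTerm}$. By the Inst rule it then suffices to show
\[
\genideal{E'}{\instantiatedTerm} \leq \genTracked{E(\tau_i)}{g_i}{\instantiatedTerm},
\]
which is the lemma's inequality at an earlier state. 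So the two statements are proved simultaneously by induction on the run.

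The main obstacle is the remaining step: showing $\genideal{E'}{t} \leq \genTracked{E}{g}{t}$ given the invariant on $g$, for any $E \subseteq E'$. By definition, for every subterm $s \in t$ there is a $c_s$ with $\cg{c_s}{s}$ and $g(c_s) \leq m \defineEq \genTracked{E}{g}{t}$, and by the invariant $\derivable{E'}{\instUniv}{c_s}{g(c_s)}$. I would rebuild $t$ bottom-up by structural induction, proving $\derivable{E'}{\instUniv}{s}{\leq m}$ for every $s \in t$. If $s$ is a constant, $c_s$ is a constant congruent to it, hence $c_s = s$. If $s = f(s_1,\ldots,s_n)$, then $c_s = f(c_1,\ldots,c_n)$ with $c_i \cong_{E} s_i$, and $c_s$ is derivable at level $\leq m$. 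The induction hypothesis gives each $s_i$ at level $\leq m$, so applying the Congruence rule $n$ times (context $f(\ldots,\square,\ldots)$, replacing $c_i$ by $s_i$) yields $s$ at level $\max(\cdot) \leq m$. Lemma \ref{lemma:subterm-genmap} is needed because the Congruence rule uses contexts of the form $t[s_1]$, which require subterms of $c_s$ to have been derived first. It guarantees that the subterms $c_i$ of $c_s$ have $g(c_i) \leq g(c_s)$, so their derivations also sit within the bound. Finally, I would take care over the case where some minimum is $\infty$: then $\genTracked{E}{g}{t} = \infty$ and the inequality is trivial.
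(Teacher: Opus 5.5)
Your proof is correct, but it decomposes the argument differently from the paper. Your witness is the same as the paper's: the paper extends $E'$ to $E' \cup E(\tau)$ at every \PushEq and \PushInst step, and over a run this accumulates exactly the equalities you collect up front.

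The paper inducts on the lemma's own statement. So in both the \PushEq and \PushInst cases it redoes a structural induction on $t$:
it picks a minimal congruent term $c[b]$;
it invokes \lemmaRef{subterm-genmap} to turn $g(c[b]) \leq n$ into $\genTracked{E(\tau')}{g}{c[b]} \leq n$, so that its tracked-generation hypothesis applies;
it rebuilds $c[a]$ with a single Congruence step, explicitly ignoring terms that differ in more than one argument.

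You instead strengthen the invariant to the pointwise bound $\genideal{E'}{c} \leq g(c)$, with $E'$ fixed for the whole run. This buys you three things: \PushEq, \PopEq and \PopInst become trivial; the structural argument lives in one sublemma valid for every $E \subseteq E'$; and iterating Congruence covers all arguments. The paper's route avoids an auxiliary invariant, but pays with the subterm lemma and a repeated, simplified case analysis.

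Two minor fixes:
\begin{itemize}
\item Phrase the invariant as $\genideal{E'}{c} \leq g(c)$, not as exact derivability $\derivable{E'}{\instUniv}{c}{g(c)}$. Derivable levels are not upward closed, and after a \PushInst step the Inst rule only gives level $\weight{q} + \genideal{E'}{\instantiatedTerm}$, which may lie strictly below the new $g(t)$.
\item \lemmaRef{subterm-genmap} is not actually needed in your proof. The Congruence rule requires derivations of the whole term $f(c_1,\ldots,c_n)$ and of the replacement $s_i$, never of the replaced subterm $c_i$ on its own. The paper needs that lemma only because its hypothesis is about tracked generations rather than about $g$ pointwise.
\end{itemize}
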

\begin{proof}
By induction on the reachability of $\state{\tau, g}$.
\begin{itemize}
    \item Initially, $\tau$ is empty. We choose $E' := \emptyset$. Then for all $t \in A$, both generations are 0 ($\genTracked{E(\tau)}{g}{t} = 0$ because A is subterm-closed), and for $t \not\in A$ both generations are $\infty$.
    \item $(\PushEq)$ $\tau \defineEq\tau'\cdot\eqEvent{e}$,
    $\IH_1 \defineEq \forall t. \; \genideal{E'}{t} \leq  \genTracked{E(\tau')}{g}{t}$:
        We choose the witness $E'' \defineEq E' \cup E(\tau)$. 
        Proof by structural induction on $t$.
        Let $n \defineEq \genTracked{E(\tau)}{g}{t}$.
        \begin{itemize}
            \item If $t$ is a constant (0-arity function), the only term congruent to $t$ is $t$ itself. Then $\genMap(t) = n$ and $\genTracked{E(\tau')}{\genMap}{t} = n$. By $\IH_1$, $\genideal{E'}{t} \leq n$. Since $E'\subseteq E''$, $\genideal{E''}{t} \leq n$.
            \item Otherwise $t \defineEq c[a]$. By the structural induction hypothesis, $\genideal{E''}{a} \leq n$. Assume all terms congruent to $t$ can be expressed as $c[b]$ for some $b$ with $a \cong_{E(\tau)} b$. In reality, a congruent term can be obtained by rewriting more than one argument of the top-level function as well, but we ignore this for simplicity of presentation.
            For some $b$, $\genMap(c[b]) \leq n$. 
            Then, by \lemmaRef{subterm-genmap}, $\genTracked{E(\tau')}{\genMap}{c[b]} \leq n$ and so by $\IH_1$, $\genideal{E'}{c[b]} \leq n$.
            Since $E'\subseteq E''$, $\genideal{E''}{c[b]} \leq n$ and since $E(\tau) \subseteq E'$, $a \cong_{E''} b$.
            Then by the \Congruence rule for deriving generations, $\genideal{E''}{c[a]} \leq n$.
        \end{itemize}
    \item $(\PopEq)$ $\tau\cdot\eqEvent{e} = \tau'$, 
    $\IH_1 \defineEq \forall t. \; \genideal{E'}{t} \leq  \genTracked{E(\tau')}{g}{t}$:
    Since $E(\tau) \subseteq E(\tau')$, for all $t$, $\genTracked{E(\tau')}{g}{t} \leq \genTracked{E(\tau)}{g}{t}$. Then $\genideal{E'}{t} \leq \genTracked{E(\tau)}{g}{t}$ so $E'$ serves as a witness.
    \item $(\PushInst)$ $\tau = \tau'\cdot\instEvent{\qInst{q}{\instantiatedTerm}}, g = \updateTr{E}{g'}{\qInst{q}{\instantiatedTerm}}, \IH_1 \defineEq \forall t. \; \genideal{E'}{t} \leq  \genTracked{E(\tau')}{g'}{t}$:
    We choose $E'' \defineEq E' \cup E(\tau)$ as the witness.
    The proof is structurally inductive, like the \PushEq case.
    Let $n \defineEq \genTracked{E(\tau)}{g}{t}$.
    \begin{itemize}
        \item If $t$ is a constant, $\genMap(t) = n$ and $\genMap'(t) = \genTracked{E(\tau')}{g'}{t}$.
        Since $E'\subseteq E''$, $\genideal{E''}{t} \leq n$.
        If $\genMap(t) = \genMap'(t)$ we conclude by $\IH_1$. 
        Otherwise, $g(t) = \weight{q}  + \genTracked{E}{\genMap'}{\instantiatedTerm}$ with $t \in \qInst{q}{\instantiatedTerm}$.
        Since $E(\tau)\subseteq E''$, $\weight{q} + \genideal{E''}{\instantiatedTerm} \leq n$.
        Then by the \Inst rule for deriving generations, $\genideal{E''}{t} \leq n$.
        \item If $t$ is a function application, we follow an identical argument to the second case of $(\PushEq)$.
        An ideal generation less than $n$ can be derived for $c[b]$ via the \Inst rule, and then for $c[a]$ via the \Congruence rule.
    \end{itemize}
    \item $(\PopInst)$: $E(\tau) = E(\tau')$ and $g = g'$, so we conclude trivially via the induction hypothesis.    
\end{itemize}
\end{proof}

\end{document}